\pgfplotsset{compat=newest}
\definecolor{red}{rgb}{0.7,0.15,0.15}
\definecolor{green}{rgb}{0,0.5,0}
\definecolor{blue}{rgb}{0,0,0.7}
\makeatletter \@addtoreset{equation}{section}
\newtheorem{theorem}{Theorem}
\newtheorem{theorem2}{Theorem}[section]
\newtheorem{lemma}[theorem2]{Lemma}
\newtheorem{proposition}[theorem2]{Proposition}
\newtheorem{definition}[theorem2]{Definition}
\newtheorem{remark}[theorem2]{Remark}
\newcommand{\comment}[1]{}
\def \E{\mathbb{E}}
\def \N{\mathbb{N}}
\def \P{\mathbb{P}}
\def \R{\mathbb{R}}
\def \V{\mathbb{V}}
\def\Cc{{\cal C}}
\def\Nc{{\cal N}}
\title{Equilibria and incentives for illiquid auction markets}
\author{Joffrey {\sc Derchu}\footnote{\'Ecole Polytechnique, CMAP, 91128, Palaiseau, France, joffrey.derchu@polytechnique.edu},~ Dimitrios {\sc Kavvathas}\footnote{Harmony Advisors Ltd.; Dimitrios.kavvathas@harmonyadvisors.com},~ Thibaut {\sc Mastrolia}\footnote{Department of Industrial Engineering and Operations Research, UC Berkeley; mastrolia@berkeley.edu}~~\\and Mathieu {\sc Rosenbaum}\footnote{\'Ecole Polytechnique, CMAP; mathieu.rosenbaum@polytechnique.edu} \footnote{This work benefits from the financial support of the Chaires Analytics and Models for Regulation, Deep finance and statistics, Machine learning and systematic methods.}}
\begin{document}

\maketitle
\begin{abstract}
We study a toy two-player game for periodic double auction markets to generate liquidity. The game has imperfect information, which allows us to link market spreads with signal strength. We characterize Nash equilibria in cases with or without incentives from the exchange. This enables us to derive new insights about price formation and incentives design. We show in particular that without any incentives, the market is inefficient and does not lead to any trade between market participants. We however prove that quadratic fees indexed on each players half spread leads to a transaction and we propose a quantitative value for the optimal fees that the exchange has to propose in this model to generate liquidity.   \\
\noindent{\bf Keywords: double auction, Nash equilibrium, game with imperfect information.}
\end{abstract}

\section{Introduction}

Auctions are becoming an increasingly popular trading mechanism. In such market design, market participants typically send
market and limit orders during some (randomized) period of time. At the end of this period, a clearing price is set so that
it maximises the traded volume. Some exchanges offer auctions in a periodic way throughout the day, such as BATS-CBOE for
European equities. More generally,  many platforms use auctions at the beginning and at the end of the day, often collecting a very
significant part of the daily traded volume at these specific times. The large interest in auctions in the recent years is notably
due to the role of high frequency trading in the modification of the financial markets ecosystem.
In some influential papers such as \cite{budish2015high,du2017optimal,farmer2012review,madhavan1992trading,wah2013latency} the authors explore flaws of limit order books and argue that high
frequency traders operating in a continuous limit order book generate negative externalities for investors
due to their speed advantage. By their very definition, auctions mechanically slow down the market and therefore
are a remedy to some of the supposed inherent flaws of continuous limit order books associated to ultra fast trading. Inspired by \cite{du2017optimal,Duffie2015SizeD,Antill2017AugmentingMW}, \cite{jusselin2019optimal} shows that from a price discovery viewpoint, the optimal auction duration is between 1 and 10 minutes for typical stocks. In addition they demonstrate that continuous limit order books are in general a sub-optimal market mechanism. Combination of auctions and continuous limit order books are investigated in \cite{derchu2020ahead}. Furthermore, for illiquid assets, offering continuous trading is probably not required to obtain a satisfactory price formation process. Therefore, auctions appears as a natural and cost efficient design for an exchange.\\

In this work, we are interested in understanding the behavior of market participants during an auction in a illiquid market. We show in particular how incentives allow the exchange to create new market equilibria and generate liquidity. By illiquid we mean a market where trading frequency is low, volumes are small, market data is limited and information is scarce.
In this framework, we use a notion of efficient price, to be understood as a partially known long term value of the asset, also often called the ``fair'' price. Since we consider an illiquid market, at a given time point, it may be significantly different from the last traded price.\\

We model this market by a very simplified one period framework with two participants, one buyer and one seller, taking part to an auction organized by an exchange. At the time of the auction, we assume that the increment between the fair price, denoted by $P_{\infty}$ and the last traded price is given by a centered Gaussian variable with variance $v^2$. The parameter $v$ can be interpreted as the degree of illiquidity of our market. A large $v$ can for instance translate the fact that auctions are not very frequent, leading to large fluctuations of the fair price between them. It can also mean that the asset is simply very volatile, for example because of the uncertainty about its value due to limited available information.\\

We assume both participants have access to a proxy of the fair price at the beginning of the auction, denoted by $P_{\infty|a}$ for the seller and $P_{\infty|b}$ for the buyer. The standard deviations of the error of this proxy are drawn from independent random variables, denoted by $\sigma_a$ and $\sigma_b$. These standard deviations are specific to each market participant and represent their degree of sophistication. Using his personal information, the buyer agent computes a maximal price at which he is willing to buy and send a limit order, and symmetrically for the seller one. Inspired by standard quoting mechanisms, we suppose that the agents build their quote as a deviation from their proxy depending on their uncertainty about the quality of their proxy. Therefore these quotes take the form  $P_{\infty|i}\pm \delta_i(\sigma_i)$ for $i=a,b$. More precisely, we are interested in the symmetric case where market participants have the same pricing rule and consequently the function $\delta_i$ does actually not depend on $i$. If the buy limit order price is larger than the sell limit order price, a trade happens at the midprice. So, for a given function $\delta$, each (risk neutral) agent optimizes simultaneously the price of its limit order by maximizing or minimizing the expected value of the difference between the obtained midprice and the fair price, taking into account the probability that no trade happens.\\

The goal of this paper is to study the situations where such game leads to a viable market. To do so, we aim at finding Nash equilibria. A Nash equilibrium in our context simply consists in the existence of a function $\delta$ such that both market participants can optimize at the same time and have no interest to deviate from the computed optima. We actually show that in such framework, Nash equilibria usually do not exist. This is bad news from an exchange or regulator perspective as it tells us that even in a very simplified setting, we can expect unstable behaviours from market participants. To remedy this, we consider that the exchange is able to put in place a make-take fees system, where he rewards or penalizes market participants with respect to the quality of their quotes. We first study the case where the penalty on market participants depends on the price formation process: participants are penalized if the auction clearing price is far from the efficient price. Then we consider the situation where the penalty is computed from the displayed spreads. Interestingly, we establish that one can recover Nash equilibria with the second mechanism while the first one does not enable us to do so. We also explain how to  design optimally the spread penalty leading to Nash equilibrium.\\

The paper is organized as follows. We start by introducing the model and the assumptions in Section \ref{sec::modeleq}. Then we consider the case where an exchange imposes penalties on the players in Section \ref{sec::pen}. Section \ref{sec::num} gathers the numerical results. We study some extensions of our framework in Section \ref{sec::ext}. Finally, some proofs are relegated to the appendix.

\section{The game}\label{sec::modeleq}
We consider two players, Player $a$ and Player $b$ aiming at trade an asset. Player $a$ sends a sell limit order at a price $P^a\in\R$ while Player $b$ sends a buy limit order at the price $P^b\in\R$. Both players announce their price simultaneously. If $P^a\leq P^b$, a trade occurs at the price $\frac{P^a+P^b}{2}$. In this case, player $a$ sells one unit at the price $\frac{P^a+P^b}{2}$ to player $b$. From now on, Player $a$ is always selling ($a$ stands for `ask') and Player $b$ is always buying ($b$ stands for `bid'). We could easily extend our framework to the case where our two players send both buy and sell orders.\\

 We assume that there exists on this market an  efficient price of the asset denoted by $P_\infty$. At the time of the auction, the increment between this fair price and the last traded price of the previous auction is normally distributed with mean 0 and variance $v^2$ with $v\in\R$ and possibly $v\to +\infty$. The parameter $v$ represents the degree of illiquidity of the market. A large $v$ indicates that auctions are not very frequent leading to possibly large fluctuation between the efficient price and the auction price or that the asset is very volatile.

\subsection{Efficient price estimation and market statistics}

Due to the illiquid nature of our market, our players do not have access to $P_\infty$. They can only estimate it based on some private information. Each player $i=a,b$ compute his own private estimate $P_{\infty|i}$ of $P_\infty$ defined by
\begin{align*}
    P_{\infty|a}=P_\infty+\boldsymbol\sigma_a\epsilon_a\\
    P_{\infty|b}=P_\infty+\boldsymbol\sigma_b\epsilon_b
\end{align*}
where\footnote{From now on, bold letters are used to emphasize certain random variables.}
\begin{itemize}
    \item $\boldsymbol\sigma_a, \boldsymbol\sigma_b$ denote player $i$'s uncertainty in his estimate. We assume that $\boldsymbol\sigma_a$ and $\boldsymbol\sigma_b$ are uniformly distributed on  $[\sigma_-,\sigma_+]$, for some bounds $0<\sigma_-<\sigma_+$.  \item $\epsilon_a$ and $\epsilon_b$ are normally distributed with zero mean and unit variance such that $\text{Corr}(\epsilon_a,\epsilon_b)=\rho\in(-1,1)$.
    \item The variables $P_\infty,\boldsymbol\sigma_a, \boldsymbol\sigma_b$ and the couple $(\epsilon_a,\epsilon_b)$ are all independent from one another. 
\end{itemize}

We assume that the only information available to each player $i$ is $P_{\infty|i}$ and $\boldsymbol\sigma_i$. In particular the players do not know the efficient price $P_\infty$, the other player's signal $P_{\infty|i}$ nor the other player's confidence in his signal $\boldsymbol\sigma_i$. Recalling that bold symbol stands for random variables and to simplify notations, we set for some $\sigma_a,\sigma_b\in[\sigma_-,\sigma_+
] $
\[
\Sigma_\rho:= \sqrt{\sigma_a^2+\sigma_b^2-2\rho\sigma_a\sigma_b}, \quad  \boldsymbol\Sigma_\rho:=\sqrt{ \boldsymbol\sigma_a^2+\boldsymbol\sigma_b^2-2\rho\boldsymbol\sigma_a\boldsymbol\sigma_b},\quad \boldsymbol\Sigma^b_\rho:=\sqrt{ \sigma_a^2+\boldsymbol\sigma_b^2-2\rho\sigma_a\boldsymbol\sigma_b},
\]

\[Q_\rho:=  \frac{\frac{1}{\sigma_b^2}-\frac{\rho}{\sigma_a\sigma_b}}{\frac{1}{\sigma_a^2}+\frac{1}{\sigma_b^2}-2\frac{\rho}{\sigma_a\sigma_b}}=\frac{\sigma_a-\rho\sigma_b}{\Sigma_\rho^2}, \quad \boldsymbol Q^b_\rho:= \frac{\frac{1}{\boldsymbol\sigma_b^2}-\frac{\rho}{\sigma_a\boldsymbol\sigma_b}}{\frac{1}{\sigma_a^2}+\frac{1}{\boldsymbol\sigma_b^2}-2\frac{\rho}{\sigma_a\boldsymbol\sigma_b}}=\frac{\sigma_a-\rho\boldsymbol\sigma_b}{|\boldsymbol\Sigma^b_\rho|^2}, 
\]
and
\[ \tilde Q_\rho:=\frac{\frac{1}{\sigma_a^2}-\frac{\rho}{\sigma_a\sigma_b}}{\frac{1}{\sigma_a^2}+\frac{1}{\sigma_b^2}-2\frac{\rho}{\sigma_a\sigma_b}}=\frac{\sigma_b-\rho\sigma_a}{\Sigma_\rho^2}.\]

As explained in the introduction, we assume that the way market participants form their prices $P^a,P^b$ is as follows
\[ P^a=P_{\infty|a}+\delta(\boldsymbol\sigma_a) \text{ and } P^b=P_{\infty|b}-\delta(\boldsymbol\sigma_b),\text{ for some function }\delta: [\sigma_-,\sigma_+]\longrightarrow \mathbb R.\]

Given a strategy $\delta$, statistics on the market quality can be easily derived as stated in the proposition below. 
\begin{proposition}\label{prop::stats}
Let $\delta$ be fixed, then we can compute the following quantities
\begin{itemize}
  \item Spread:   $\mathbb E[P^a-P^b]=2\E[\delta(\boldsymbol\sigma_a)]$
  \item Variance of the spread:   $\V[P^a-P^b]=2(\E[\boldsymbol\sigma_a^2]-\rho\E[\boldsymbol\sigma_a]^2+\V[\delta(\boldsymbol\sigma_a)])$
 \item Average mid price-efficient price error:   $ \E[\frac{P^a+P^b}{2}-P_\infty] = 0$
   \item Variance of the mid price-efficient price error:   $\V[\frac{P^a+P^b}{2}-P_\infty]= \frac{\E[\boldsymbol\sigma_a^2]+\rho\E[\boldsymbol\sigma_a]^2+\V[\delta(\boldsymbol\sigma_a)]}{2}$ 
  \item Trade occurrence: $ \P[P^a\leq P^b] = 1-\E[\Phi(\frac{\delta(\boldsymbol\sigma_a)+\delta(\boldsymbol\sigma_b)}{\boldsymbol\Sigma_\rho})].$ 
  \end{itemize}
\end{proposition}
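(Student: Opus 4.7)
The plan is to expand $P^a = P_\infty + \boldsymbol\sigma_a\epsilon_a + \delta(\boldsymbol\sigma_a)$ and $P^b = P_\infty + \boldsymbol\sigma_b\epsilon_b - \delta(\boldsymbol\sigma_b)$, then read off each statistic as a mean or variance of a sum of two blocks that turn out to be uncorrelated: a \emph{noise block} involving the $\epsilon_i$'s and a \emph{quote block} involving $\delta$. The structural facts I exploit throughout are (i) $(\epsilon_a,\epsilon_b)$ is independent of $(\boldsymbol\sigma_a,\boldsymbol\sigma_b)$ with $\E[\epsilon_i]=0$, $\V[\epsilon_i]=1$ and $\text{Corr}(\epsilon_a,\epsilon_b)=\rho$; (ii) $\boldsymbol\sigma_a$ and $\boldsymbol\sigma_b$ are i.i.d., which gives the exchangeability identities $\E[\delta(\boldsymbol\sigma_a)]=\E[\delta(\boldsymbol\sigma_b)]$, $\E[\boldsymbol\sigma_a^2]=\E[\boldsymbol\sigma_b^2]$ and $\E[\boldsymbol\sigma_a\boldsymbol\sigma_b]=\E[\boldsymbol\sigma_a]^2$.

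For the spread and its variance I write $P^a-P^b = (\boldsymbol\sigma_a\epsilon_a-\boldsymbol\sigma_b\epsilon_b)+(\delta(\boldsymbol\sigma_a)+\delta(\boldsymbol\sigma_b))$. Taking expectation, the noise block vanishes because each term factorizes as $\E[\boldsymbol\sigma_i]\E[\epsilon_i]=0$, and the quote block collapses to $2\E[\delta(\boldsymbol\sigma_a)]$ by exchangeability. For the variance, the cross-covariance of the two blocks is zero since each of the four mixed expectations again contains an isolated factor $\E[\epsilon_i]=0$; a short computation then gives $\V[\boldsymbol\sigma_a\epsilon_a-\boldsymbol\sigma_b\epsilon_b]=2\E[\boldsymbol\sigma_a^2]-2\rho\E[\boldsymbol\sigma_a]^2$ and $\V[\delta(\boldsymbol\sigma_a)+\delta(\boldsymbol\sigma_b)]=2\V[\delta(\boldsymbol\sigma_a)]$. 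The mid-price/efficient-price statistics are handled identically by writing $\frac{P^a+P^b}{2}-P_\infty=\frac{1}{2}(\boldsymbol\sigma_a\epsilon_a+\boldsymbol\sigma_b\epsilon_b)+\frac{1}{2}(\delta(\boldsymbol\sigma_a)-\delta(\boldsymbol\sigma_b))$: the mean vanishes for the same two reasons as before, and the variance is $\frac{1}{4}\bigl(2\E[\boldsymbol\sigma_a^2]+2\rho\E[\boldsymbol\sigma_a]^2\bigr)+\frac{1}{4}\cdot 2\V[\delta(\boldsymbol\sigma_a)]$, which rearranges to the claimed formula.

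The only item that is not purely moment-based is the trade probability, which I handle by conditioning on $(\boldsymbol\sigma_a,\boldsymbol\sigma_b)$. Given $\boldsymbol\sigma_a=\sigma_a$ and $\boldsymbol\sigma_b=\sigma_b$, the variable $\sigma_a\epsilon_a-\sigma_b\epsilon_b$ is centered Gaussian with variance $\Sigma_\rho^2$, so $\P[P^a\le P^b\mid\boldsymbol\sigma_a,\boldsymbol\sigma_b]=\Phi\!\left(-\tfrac{\delta(\boldsymbol\sigma_a)+\delta(\boldsymbol\sigma_b)}{\boldsymbol\Sigma_\rho}\right)=1-\Phi\!\left(\tfrac{\delta(\boldsymbol\sigma_a)+\delta(\boldsymbol\sigma_b)}{\boldsymbol\Sigma_\rho}\right)$, and the tower property closes the argument. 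I do not anticipate any substantive obstacle: the whole proposition reduces to the two-block decomposition above, and the only thing worth flagging is the careful use of exchangeability of $\boldsymbol\sigma_a,\boldsymbol\sigma_b$ to present the second-moment terms in the symmetric form ($\E[\boldsymbol\sigma_a^2]$, $\E[\boldsymbol\sigma_a]^2$, $\V[\delta(\boldsymbol\sigma_a)]$) appearing in the statement.
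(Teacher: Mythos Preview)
Your proof is correct. The paper does not actually supply a proof of this proposition; it states the formulas and immediately moves on to discussing their interpretation, so there is nothing to compare against. Your two-block decomposition (noise block versus quote block), together with the independence of $(\epsilon_a,\epsilon_b)$ from $(\boldsymbol\sigma_a,\boldsymbol\sigma_b)$ and the i.i.d.\ structure of the $\boldsymbol\sigma_i$, is exactly the elementary computation the authors evidently regarded as routine, and every step checks out.
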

We note that those results hold when $v$ is finite and at the limit $v\to+\infty$.\\

The average error on the efficient price is zero which shows that if there is an equilibrium a reasonable price formation process takes place. The variance of the spread and the variance of the error on the efficient price are equal up to a multiplicative constant. Interestingly, only the variance of $\delta(\boldsymbol\sigma_a)$ appears in the variance of the error on the efficient price. In other words the relevance of the mid-price as an estimator of the efficient price is not impaired if the spreads are uniformly increased (if we add some constant to $\delta$). The quality of the mid-price proxy decreases if the  spreads are transformed linearly (if we multiply $\delta$ by some constant larger than $1$). Note also that both variances are lower bounded (by $2\V[\boldsymbol\sigma_a^2]$ and $\frac{\V[\boldsymbol\sigma_a^2]}{2}$ respectively): this implies that the market quality is limited by the quality of its participants, and no amount of incentives to change $\delta$ could make the market better than this limit. The number of trades is related in a non-trivial way to $\delta$, though we still have the obvious result that a smaller $\delta$ leads to more trades. Positive correlation tends to fix the spread but damages the price formation process.

\subsection{Optimization and equilibrium}

We assume that both players send their orders simultaneously without any communication between them.\\

Given $\boldsymbol\sigma_a$ and $P_{\infty|a}$ and that Player $b$ sets $P^b=P_{\infty|b}-\delta(\boldsymbol\sigma_b$), Player $a$ wants to maximize over $x=P^a-P_{\infty|a}$ his expected payoff
\begin{align*}
    \E[(\frac{P_{\infty|a}+x+P_{\infty|b}-\delta(\boldsymbol\sigma_b)}{2}-P_\infty)\mathbf{1}_{P_{\infty|a}+x\leq P_{\infty|b}-\delta(\boldsymbol\sigma_b)}| (P_{\infty|a},\boldsymbol\sigma_a)].
\end{align*}

We have the following proposition whose proof is in Appendix \ref{app::prop::lim}. 
\begin{proposition}\label{prop::lim}
Let $\delta$, $\sigma_a\in[\sigma_-,\sigma_+]$ and $p_{\infty|a}\in\R$ be fixed. Then 
\begin{align*}
    x\in\R\mapsto\E[(\frac{P_{\infty|a}+x+P_{\infty|b}-\delta(\boldsymbol\sigma_b)}{2}-P_\infty)\mathbf{1}_{P_{\infty|a}+x\leq P_{\infty|b}-\delta(\boldsymbol\sigma_b)}| (P_{\infty|a}=p_{\infty|a},\boldsymbol\sigma_a=\sigma_a)]
\end{align*}
converges uniformly to 
\begin{align*}
    e^{\sigma_a}_\delta: x\mapsto \frac{1}{\sigma_+-\sigma_-}\int_{\sigma_-}^{\sigma^+}\big(&\frac{x-\delta(\sigma_b)}{2}(1-\Phi(\frac{x+\delta(\sigma_b)}{\Sigma_\rho}))+\Sigma_\rho(\frac{1}{2}-Q_\rho)\Phi'(\frac{x+\delta(\sigma_b)}{\Sigma_\rho})\big)d\sigma_b
\end{align*}
as $v\to+\infty$, with $\Phi$ the cumulative distribution function of a standard Gaussian variable. Note that $e^{\sigma_a}_\delta$ does not depend on $p_{\infty|a}$.
\end{proposition}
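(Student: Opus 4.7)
The plan is to reduce the conditional expectation to a two-dimensional Gaussian integral, evaluate it in closed form, and pass to the limit $v\to+\infty$ with a quantitative rate. First I would exploit Gaussian conjugacy: conditional on $(P_{\infty|a},\boldsymbol\sigma_a)=(p_{\infty|a},\sigma_a)$, the posterior law of $P_\infty$ is $N\bigl(\tfrac{v^2}{v^2+\sigma_a^2}p_{\infty|a},\,\tfrac{v^2\sigma_a^2}{v^2+\sigma_a^2}\bigr)$, so $\tilde\epsilon_a:=(p_{\infty|a}-P_\infty)/\sigma_a$ is conditionally Gaussian with mean $\mu_v(p_{\infty|a})=\tfrac{\sigma_a p_{\infty|a}}{v^2+\sigma_a^2}\to 0$ and variance $\tau_v^2=\tfrac{v^2}{v^2+\sigma_a^2}\to 1$. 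Since $\epsilon_b$ is conditionally $N(\rho\epsilon_a,1-\rho^2)$ and $\boldsymbol\sigma_b$ is independent, the conditional joint law of $(\tilde\epsilon_a,\epsilon_b,\boldsymbol\sigma_b)$ converges as $v\to+\infty$ to the unconditional law of $(\epsilon_a,\epsilon_b,\boldsymbol\sigma_b)$. Substituting $P_\infty=p_{\infty|a}-\sigma_a\tilde\epsilon_a$ and $P_{\infty|b}-P_\infty=\boldsymbol\sigma_b\epsilon_b$, the integrand rewrites as $\tfrac12(x+\sigma_a\tilde\epsilon_a+\boldsymbol\sigma_b\epsilon_b-\delta(\boldsymbol\sigma_b))\mathbf 1_{\boldsymbol\sigma_b\epsilon_b-\sigma_a\tilde\epsilon_a\ge x+\delta(\boldsymbol\sigma_b)}$, so the dependence on $p_{\infty|a}$ enters only through $\mu_v(p_{\infty|a})$ and vanishes in the limit.

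Second, fixing $\boldsymbol\sigma_b=\sigma_b$, I would introduce the Gaussian pair $W:=\sigma_b\epsilon_b-\sigma_a\epsilon_a$ and $V:=\sigma_a\epsilon_a+\sigma_b\epsilon_b$, which in the limit is centered with $\V[W]=\Sigma_\rho^2$ and $\mathrm{Cov}(V,W)=\sigma_b^2-\sigma_a^2$. Splitting the integrand as $\tfrac{x-\delta(\sigma_b)}{2}\mathbf 1_{W\ge T}+\tfrac{V}{2}\mathbf 1_{W\ge T}$ with $T=x+\delta(\sigma_b)$, the standard identities $\P(W\ge T)=1-\Phi(T/\Sigma_\rho)$, $\E[W\mathbf 1_{W\ge T}]=\Sigma_\rho\Phi'(T/\Sigma_\rho)$ and $\E[V\mathbf 1_{W\ge T}]=\tfrac{\mathrm{Cov}(V,W)}{\V[W]}\E[W\mathbf 1_{W\ge T}]$, together with the algebraic identity $\tfrac{\sigma_b^2-\sigma_a^2}{2\Sigma_\rho^2}=\tfrac12-Q_\rho$ (immediate from the definition of $Q_\rho$), reproduce the expression inside the integral defining $e^{\sigma_a}_\delta$. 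Averaging over $\boldsymbol\sigma_b$, which is uniform on $[\sigma_-,\sigma_+]$, yields the claimed formula.

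It remains to upgrade pointwise convergence to a uniform statement in $x\in\R$. Performing the same derivation at finite $v$ produces an expression of identical structure but with $(\mu_v(p_{\infty|a}),\tau_v^2)$ in place of $(0,1)$ and correspondingly shifted moments of $(V,W)$; these parameters all converge at rate $O(1/v^2)$. For $x$ in any fixed compact set, uniform convergence follows directly from the continuity of $\Phi,\Phi'$ and the $O(1/v^2)$ parameter convergence. The main obstacle is the regime $|x|\to\infty$: a mean value theorem argument applied to $\Phi$ and $\Phi'$ yields an argument shift of order $(1+|x|)/v^2$, so the error in the second term is of order $|x|(1+|x|)\Phi'(T/\Sigma_\rho)/v^2$, absorbed by the Gaussian decay of $\Phi'$; the first term $\tfrac{x-\delta(\sigma_b)}{2}(1-\Phi(T/\Sigma_\rho))$ grows linearly as $x\to-\infty$, but the identical linear contribution in $x$ appears in the finite-$v$ formula and cancels in the difference, leaving a residual controlled uniformly by the same Gaussian-tail bound.
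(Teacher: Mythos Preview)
Your argument is correct and takes a somewhat different route from the paper. The paper introduces an auxiliary pair $(\epsilon_{ab},\epsilon_\infty)$, with $\epsilon_\infty$ a fairly intricate combination of $P_\infty,P_{\infty|a},P_{\infty|b}$ engineered to be independent standard Gaussians given the volatilities, and then computes their conditional law given $P_{\infty|a}$ explicitly (a bivariate normal with mean and covariance perturbed by $O(1/v^2)$ terms), finally evaluating the resulting one-dimensional Gaussian integral. You instead stay with the natural variables: the conjugate posterior of $P_\infty$ (equivalently of $\epsilon_a$), the regression $\epsilon_b\mid\epsilon_a\sim N(\rho\epsilon_a,1-\rho^2)$, and then the rotation to $(V,W)$ together with the projection identity $\E[V\mathbf 1_{W\ge T}]=\tfrac{\mathrm{Cov}(V,W)}{\V[W]}\E[W\mathbf 1_{W\ge T}]$ and the algebraic check $\tfrac{\sigma_b^2-\sigma_a^2}{2\Sigma_\rho^2}=\tfrac12-Q_\rho$. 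Your path is more elementary; the paper's buys the representation of Remark~\ref{rem:finitedelta} in terms of two \emph{independent} normals, which is convenient downstream. On uniform convergence, the paper simply asserts it ``as $\Phi$ and $\Phi'$ are bounded'', which by itself does not handle the linearly growing prefactor $\tfrac{x-\delta(\sigma_b)}{2}$; your Gaussian-tail argument is the right completion. One minor wording point: the phrase about ``the identical linear contribution \dots\ cancelling'' is slightly off---both the finite-$v$ and limiting expressions carry the same prefactor $\tfrac{x-\delta(\sigma_b)}{2}$ but multiplied by different $1-\Phi$ values, so the residual is $\tfrac{x-\delta(\sigma_b)}{2}\bigl(\Phi(\cdot)-\Phi(\cdot_v)\bigr)$; it is precisely the mean-value bound with a $\Phi'$ factor (decaying Gaussianly in $|x|$) that absorbs the polynomial growth, just as you argue for the $\Phi'$ term.
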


Note that in the case where $v$ is finite, $ e^{\sigma_a}_\delta$ could depend on $P_{\infty|a}$. However, if we take $v\to +\infty$, the expected payoff converges to a limit which does not depend on $P_{\infty|a}$ as this limit corresponds to an extremely illiquid market.

\begin{remark}\label{rem:finitedelta}
We can rewrite \begin{align*}
    e^{\sigma_a}_\delta(x)&= \frac{1}{\sigma_+-\sigma_-}\int_{\sigma_-}^{\sigma^+}\big(\frac{x-\delta(\sigma_b)}{2}(1-\Phi(\frac{x+\delta(\sigma_b)}{\Sigma_\rho}))+\Sigma_\rho(\frac{1}{2}-Q_\rho)\Phi'(\frac{x+\delta(\sigma_b)}{\Sigma_\rho})\big)d\sigma_b\\
    &=\E[\big(\frac{x-\delta(\boldsymbol\sigma_b)}{2}+\tilde\epsilon_\infty\frac{1}{\sqrt{\frac{1}{\sigma_a^2}+\frac{1}{\boldsymbol\sigma_b^2}}}+\tilde\epsilon_{ab}\boldsymbol\Sigma^b_\rho(\frac{1}{2}-\boldsymbol Q_\rho^b)\big)\mathbf{1}_{\tilde\epsilon_{ab}\geq \frac{x+\delta(\boldsymbol\sigma_b)}{\boldsymbol \Sigma_\rho^b}}],\\
\end{align*}
where $ ( \tilde\epsilon_{ab},
    \tilde \epsilon_{\infty})$ is a bivariate normal random variable centered with unit covariance matrix.
\end{remark}

We now define a Nash equilibrium in our setting.
\begin{definition}[Nash equilibrium (NE)]\label{nash:definiiton}
Let $\delta$ be a function from $[\sigma_-,\sigma_+]$ to $\R_+^*$. We say that $\delta$ is a Nash equilibrium (NE) in the case $v\to \infty$ if, for all $\sigma_a\in[\sigma_-,\sigma_+]$, we have\footnote{By symmetry we do not repeat this condition for Player $b$. If $\delta$ is an optimizer for Player $a$, it is also optimal for Player $b$.}
\begin{align*}
    \delta(\sigma_a) \in \arg\sup e^{\sigma_a}_\delta.
\end{align*}
\end{definition}

From Remark \ref{rem:finitedelta} we deduce the following important lemma which is a first step toward finding a Nash equilibrium.

\begin{lemma}
Let $\delta$ be some non-negative function on $[\sigma_-, \sigma_+]$, $\sigma_a\in[\sigma_-, \sigma_+]$. Assume that $\rho\leq\frac{\sigma_-}{\sigma_+}$. Then $e^{\sigma_a}_\delta$ attains its maximum on $\R_+$.
\end{lemma}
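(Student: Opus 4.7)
The plan is to combine monotonicity of $e^{\sigma_a}_\delta$ on $\R_-$ with tail behavior at $\pm\infty$ to confine any maximizer to a compact subinterval of $\R_+$, where continuity will yield attainment.

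Under $\rho \leq \sigma_-/\sigma_+$, I would first check that $Q_\rho$ and $1-Q_\rho$ both lie in $[0,1]$: clearing denominators in the definition one sees $Q_\rho = \sigma_a(\sigma_a-\rho\sigma_b)/\Sigma_\rho^2$ and $1-Q_\rho = \sigma_b(\sigma_b-\rho\sigma_a)/\Sigma_\rho^2$, and the assumption $\rho\sigma_+ \leq \sigma_-$ (trivial when $\rho \leq 0$) forces $\rho\sigma_a, \rho\sigma_b \leq \sigma_-$, so both numerators are non-negative.

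Differentiating the integrand $f$ of $e^{\sigma_a}_\delta$ in $x$, setting $z := (x+\delta(\sigma_b))/\Sigma_\rho$ and using $\Phi''(z) = -z\Phi'(z)$, a short calculation gives
\begin{align*}
\partial_x f(x,\sigma_b) = \tfrac{1}{2}\bigl(1-\Phi(z)\bigr) - \tfrac{1}{\Sigma_\rho}\Phi'(z)\bigl[x(1-Q_\rho) - \delta(\sigma_b)Q_\rho\bigr].
\end{align*}
For $x \leq 0$, since $\delta(\sigma_b) \geq 0$ and $Q_\rho, 1-Q_\rho \in [0,1]$, the bracket is non-positive while $1-\Phi(z) \geq 0$, so $\partial_x f \geq 0$ and $e^{\sigma_a}_\delta$ is non-decreasing on $(-\infty,0]$; in particular $\sup_{\R_-} e^{\sigma_a}_\delta = e^{\sigma_a}_\delta(0)$.

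For the tails, the representation of Remark~\ref{rem:finitedelta} gives $e^{\sigma_a}_\delta(x) \sim (x - \E[\delta(\boldsymbol\sigma_b)])/2 \to -\infty$ as $x \to -\infty$, while at $+\infty$ the Mills ratio expansion $1-\Phi(z) = \Phi'(z)/z + O(\Phi'(z)/z^3)$ yields $f(x,\sigma_b) = \Sigma_\rho(1-Q_\rho)\Phi'(z) + o(\Phi'(z))$ uniformly in $\sigma_b \in [\sigma_-,\sigma_+]$. Since $1-Q_\rho > 0$ for a.e.\ such $\sigma_b$, integration gives $e^{\sigma_a}_\delta(x) > 0$ for all sufficiently large $x$ with $e^{\sigma_a}_\delta(x) \to 0^+$; combining with continuity of $e^{\sigma_a}_\delta$ (dominated convergence in its defining integral) and the previous monotonicity, there is some $M > 0$ with $e^{\sigma_a}_\delta(x) < \tfrac{1}{2}\sup_{\R_+} e^{\sigma_a}_\delta$ for $x > M$, so the supremum is attained on the compact interval $[0,M] \subset \R_+$. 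The main obstacle I anticipate is the strict positivity of $e^{\sigma_a}_\delta$ at some large $x$: without it the sup could be $0$ and only approached at $+\infty$, with no actual maximizer, and it is precisely here that $\rho \leq \sigma_-/\sigma_+$ plays its second role, via $1-Q_\rho > 0$ a.e.
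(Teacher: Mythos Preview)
Your argument is correct and matches the paper's (very terse) approach: the paper only says ``From Remark~\ref{rem:finitedelta} we deduce\ldots'' without spelling out any details, and your derivative analysis on $\R_-$ combined with the Mills-ratio asymptotic at $+\infty$ is exactly the natural way to fill this in, consistent with the computations the paper carries out elsewhere (the same expression for $(e^{\sigma_a}_\delta)'$ appears in the proofs of Proposition~\ref{prop::noNE} and Theorem~\ref{th::quadsp}). One small caveat: the uniformity in $\sigma_b$ of your asymptotic $f(x,\sigma_b)\sim\Sigma_\rho(1-Q_\rho)\Phi'(z)$, and hence the eventual strict positivity of $e^{\sigma_a}_\delta$, tacitly uses that $\delta$ is bounded so that $z\to+\infty$ uniformly; the lemma does not state this explicitly, but the paper assumes it throughout its applications.
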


\subsection{No incentive, no equilibrium, no liquidity}
The main result of this section is that, in our framework with symmetric players and without any incentives, there is no Nash equilibrium.
\begin{proposition}\label{prop::noNE}
There is no Nash equilibrium  as defined in Definition \ref{nash:definiiton}. 
\end{proposition}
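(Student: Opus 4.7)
The plan is a proof by contradiction, in which the first-order condition characterizing a Nash equilibrium, once integrated against $\sigma_a$, collapses to $\tfrac{1}{2}\P[P^a\le P^b]=0$ thanks to a simple swap symmetry of $Q_\rho$.

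Suppose $\delta:[\sigma_-,\sigma_+]\to\R_+^*$ is a Nash equilibrium. The preceding lemma ensures that $e^{\sigma_a}_\delta$ attains its maximum on $\R_+$, and as $\delta(\sigma_a)\in\R_+^*$ is an interior maximizer, the smoothness of $e^{\sigma_a}_\delta$ in $x$ yields the first-order condition $\partial_x e^{\sigma_a}_\delta(\delta(\sigma_a))=0$. I would first differentiate under the integral in the formula of Proposition \ref{prop::lim} and use $\Phi''(y)=-y\Phi'(y)$. After regrouping,
\[
\tfrac{x-\delta(\sigma_b)}{2}+\bigl(\tfrac{1}{2}-Q_\rho\bigr)(x+\delta(\sigma_b))=x(1-Q_\rho)-Q_\rho\,\delta(\sigma_b),
\]
so that the FOC at $x=\delta(\sigma_a)$ reads
\[
\int_{\sigma_-}^{\sigma_+}\!\Bigl(\tfrac{1}{2}\bigl(1-\Phi(h)\bigr)-\tfrac{\Phi'(h)}{\Sigma_\rho}\bigl[\delta(\sigma_a)(1-Q_\rho)-Q_\rho\,\delta(\sigma_b)\bigr]\Bigr)\tfrac{d\sigma_b}{\sigma_+-\sigma_-}=0,
\]
with $h:=(\delta(\sigma_a)+\delta(\sigma_b))/\Sigma_\rho$ and $\Sigma_\rho,Q_\rho$ evaluated at $(\sigma_a,\sigma_b)$.

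The key step is to integrate this identity in $\sigma_a$ over $[\sigma_-,\sigma_+]$ against $d\sigma_a/(\sigma_+-\sigma_-)$ and exploit the following symmetries: $\Sigma_\rho(\sigma_b,\sigma_a)=\Sigma_\rho(\sigma_a,\sigma_b)$, $h(\sigma_b,\sigma_a)=h(\sigma_a,\sigma_b)$, and the crucial algebraic identity $Q_\rho(\sigma_b,\sigma_a)=\tilde Q_\rho(\sigma_a,\sigma_b)=1-Q_\rho(\sigma_a,\sigma_b)$, which follows directly from the defining formulas since $Q_\rho$ and $\tilde Q_\rho$ are complementary convex-combination weights. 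Swapping $\sigma_a\leftrightarrow\sigma_b$ in the double integral thus yields
\[
\int_{\sigma_-}^{\sigma_+}\!\!\int_{\sigma_-}^{\sigma_+}\!\delta(\sigma_a)(1-Q_\rho)\tfrac{\Phi'(h)}{\Sigma_\rho}\,d\sigma_a d\sigma_b=\int_{\sigma_-}^{\sigma_+}\!\!\int_{\sigma_-}^{\sigma_+}\!\delta(\sigma_b)\,Q_\rho\,\tfrac{\Phi'(h)}{\Sigma_\rho}\,d\sigma_a d\sigma_b,
\]
so the two $\delta$-dependent contributions to the integrated FOC cancel. What remains is $\tfrac{1}{2}\P[P^a\le P^b]=0$, which is impossible since $\delta$ takes only finite values and $\boldsymbol\sigma_a\epsilon_a-\boldsymbol\sigma_b\epsilon_b$ is a non-degenerate Gaussian, so the trade probability is strictly positive.

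The main obstacle I anticipate is correctly carrying out the algebra of the derivative and spotting the swap duality of $Q_\rho$. The probabilistic representation in Remark \ref{rem:finitedelta} helps here: the $\tilde\epsilon_\infty$ contribution to any boundary term vanishes because $\tilde\epsilon_\infty$ is centered and independent of $\tilde\epsilon_{ab}$, leaving only the clean formula above. Once these observations are in place, the contradiction is immediate and, pleasingly, requires no sign condition on $\rho$.
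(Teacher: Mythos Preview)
Your proposal is correct and follows essentially the same route as the paper: derive the first-order condition $(e^{\sigma_a}_\delta)'(\delta(\sigma_a))=0$, integrate it in $\sigma_a$, and use the swap identity $Q_\rho(\sigma_b,\sigma_a)=\tilde Q_\rho(\sigma_a,\sigma_b)=1-Q_\rho(\sigma_a,\sigma_b)$ so that the $\Phi'$ terms cancel and only $\tfrac{1}{2}\iint(1-\Phi(h))\,d\sigma_a d\sigma_b=0$ remains, a contradiction. The only cosmetic differences are that the paper writes the FOC with $\tilde Q_\rho$ rather than $1-Q_\rho$ and treats the degenerate case $\sigma_-=\sigma_+$ separately; note also that you do not actually need to invoke the preceding lemma, since the NE hypothesis already places the maximizer at an interior point of~$\R$.
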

\begin{proof}
If $\sigma_+=\sigma_-$, we need $1-\Phi(\frac{\sqrt{2}\delta(\sigma_+)}{\sigma_+\sqrt{1-\rho^2}})=0$ which is impossible with $\delta(\sigma_+)$ finite.\\

If $\sigma_+>\sigma_-$, we must have
\begin{align*}
    &(e^{\sigma_a}_\delta)'(\delta(\sigma_a))\\
    &= \frac{1}{\sigma_+-\sigma_-}\int_{\sigma_-}^{\sigma^+}\big(\frac{1}{2}(1-\Phi(\frac{\delta(\sigma_a)+\delta(\sigma_b)}{\Sigma_\rho}))+\frac{1}{\Sigma_\rho}(-\delta(\sigma_a)\tilde Q_\rho+\delta(\sigma_b)Q_\rho)\Phi'(\frac{\delta(\sigma_a)+\delta(\sigma_b)}{\Sigma_\rho})\big)d\sigma_b\\
    &=0
\end{align*}
for all $\sigma_a\in[\sigma_-,\sigma_+]$. Thus
\begin{align*}
    &\int_{\sigma_-}^{\sigma^+}\int_{\sigma_-}^{\sigma^+}\big(\frac{1}{2}(1-\Phi(\frac{\delta(\sigma_a)+\delta(\sigma_b)}{\Sigma_\rho}))\big)d\sigma_a d\sigma_b=0
\end{align*}
which is impossible with $\delta$ finite.
\end{proof}

To mitigate the fact that there is no equilibrium in our framework, we introduce some frictions under the form of transaction fees imposed by the exchange, hoping that it can lead to a more viable market.

\section{Optimal transaction costs to create liquidity }\label{sec::pen}

We introduce penalties in the players payoffs. This translates real-world liquidity provision programs.

\subsection{Price discovery penalty: inefficient policy}\label{subsec::quadfeesmidprice}
In this section, we assume that the exchange is able to know $P_\infty$. For example, one can thing of $P_\infty$ as the closing price pf the day.
We suppose that both players pay a penalty $\gamma(\frac{P^a+P^b}{2}-P_\infty)^2$ with $\gamma>0$. The problem of Player a is to maximize on $x=P^a-P_{\infty|a}$ the penalized payoff 

\begin{align*}
   \E[(\frac{P_{\infty|a}+x+P_{\infty|b}-\delta(\boldsymbol\sigma_b)}{2}-P_\infty)\mathbf{1}_{P_{\infty|a}+x\leq P_{\infty|b}-\delta(\boldsymbol\sigma_b)}-\gamma(\frac{P^a+P^b}{2}-P_\infty)^2| (P_{\infty|a},\boldsymbol\sigma_a)].
\end{align*}

Then we have the following result.
\begin{proposition}\label{prop::penquaderr}
Let $\gamma>0$. Then there is no NE with finite $\delta$.
\end{proposition}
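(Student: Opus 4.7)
The plan is to mimic the proof of Proposition \ref{prop::noNE}: derive the first-order condition for player $a$ at $x=\delta(\sigma_a)$ in the limit $v\to+\infty$, integrate it over $\sigma_a\in[\sigma_-,\sigma_+]$, and exploit the $(\sigma_a,\sigma_b)$-symmetry to reach the same untenable identity as in the unpenalized case. The only new ingredient is to control the contribution of the quadratic penalty to that first-order condition.

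I would first extend Proposition \ref{prop::lim} to the penalized payoff. Using the decomposition $\sigma_a\epsilon_a+\boldsymbol\sigma_b\epsilon_b=(\sigma_a+\rho\boldsymbol\sigma_b)\epsilon_a+\boldsymbol\sigma_b\sqrt{1-\rho^2}\,\eta$ with $\eta$ a standard Gaussian independent of $\epsilon_a$, and noting that $\epsilon_a\mid(P_{\infty|a},\sigma_a)$ converges in law to $\mathcal N(0,1)$ as $v\to+\infty$, a direct Gaussian computation gives
\begin{align*}
    \gamma\,\E\Bigl[\bigl(\tfrac{P^a+P^b}{2}-P_\infty\bigr)^2\Bigm|(P_{\infty|a},\boldsymbol\sigma_a=\sigma_a)\Bigr]\longrightarrow \frac{\gamma}{4(\sigma_+-\sigma_-)}\int_{\sigma_-}^{\sigma_+}\!\bigl((x-\delta(\sigma_b))^2+\sigma_a^2+2\rho\sigma_a\sigma_b+\sigma_b^2\bigr)d\sigma_b,
\end{align*}
independent of $P_{\infty|a}$. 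Differentiating in $x$ adds the term $-\tfrac{\gamma}{2}(x-\E[\delta(\boldsymbol\sigma_b)])$ to the derivative of the limit payoff, so any NE $\delta$ must satisfy
\begin{align*}
    (e^{\sigma_a}_\delta)'(\delta(\sigma_a))=\tfrac{\gamma}{2}\bigl(\delta(\sigma_a)-\E[\delta(\boldsymbol\sigma_b)]\bigr),\qquad \text{for all }\sigma_a\in[\sigma_-,\sigma_+].
\end{align*}

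Averaging this identity over $\sigma_a$ uniform on $[\sigma_-,\sigma_+]$ kills the right-hand side since $\tfrac{\gamma}{2}(\E[\delta(\boldsymbol\sigma_a)]-\E[\delta(\boldsymbol\sigma_b)])=0$ by symmetry. On the left-hand side the $\Phi'$ contribution vanishes under the swap $\sigma_a\leftrightarrow\sigma_b$: this swap fixes $\Sigma_\rho$ and exchanges $Q_\rho$ with $\tilde Q_\rho$, so that $-\delta(\sigma_a)\tilde Q_\rho+\delta(\sigma_b)Q_\rho$ is antisymmetric while $\Phi'\bigl((\delta(\sigma_a)+\delta(\sigma_b))/\Sigma_\rho\bigr)/\Sigma_\rho$ is symmetric. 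One is then left with
\begin{align*}
    \int_{\sigma_-}^{\sigma_+}\!\int_{\sigma_-}^{\sigma_+}\tfrac{1}{2}\bigl(1-\Phi\bigl(\tfrac{\delta(\sigma_a)+\delta(\sigma_b)}{\Sigma_\rho}\bigr)\bigr)d\sigma_a d\sigma_b = 0,
\end{align*}
which is impossible for any finite $\delta$, yielding the contradiction.

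The main (though minor) obstacle is the passage to the limit $v\to+\infty$ in the penalty term, which does not follow directly from Proposition \ref{prop::lim}; however the integrand is the second moment of an affine-in-$v$ Gaussian family with mean and variance uniformly bounded for $v$ large, so dominated convergence or the explicit computation above suffices. Intuitively, penalizing the mid-price error has zero net effect once averaged over $\sigma_a$, hence cannot compensate for the strictly positive no-trade probability that already doomed the unpenalized game.
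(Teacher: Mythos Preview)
Your proof is correct and follows essentially the same route as the paper: compute the penalty contribution to the limit first-order condition, then integrate over $\sigma_a$ and use the $(\sigma_a,\sigma_b)$-symmetry to cancel both the $\Phi'$ term and the penalty contribution, reducing to the same impossible identity as in Proposition~\ref{prop::noNE}. The only cosmetic differences are that the paper absorbs the penalty into a redefined $e^{\sigma_a}_\delta$ (and carries a harmless factor $2\gamma$ instead of your $\gamma/2$, a slip that does not affect the argument), and treats the degenerate case $\sigma_-=\sigma_+$ separately, whereas your averaging argument handles both cases at once.
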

\begin{proof}
Using the same notations as previously we have
\begin{align*}
    \E[(\frac{P^a+P^b}{2}-P_\infty)^2| (P_{\infty|a},\sigma_a)] = \frac{1}{4}(x^2-2x\E[\delta(\sigma_b)]+\E[\delta(\boldsymbol\sigma_b)^2]+\sigma_a^2+\E[\boldsymbol\sigma_b^2]+2\rho\sigma_a\E[\boldsymbol\sigma_b])
\end{align*}
so
\begin{align*}
    e^{\sigma_a}_\delta(x)=& \frac{1}{\sigma_+-\sigma_-}\int_{\sigma_-}^{\sigma^+}\big(\frac{x-\delta(\sigma_b)}{2}(1-\Phi(\frac{x+\delta(\sigma_b)}{\Sigma_\rho}))+\Sigma_\rho(\frac{1}{2}-Q_\rho)\Phi'(\frac{x+\delta(\sigma_b)}{\Sigma_\rho})\big)d\sigma_b\\
    &- \gamma\frac{1}{\sigma_+-\sigma_-}\int_{\sigma_-}^{\sigma^+}(\delta(\sigma_b)^2+\sigma_b^2+2\rho\sigma_a\sigma_b)d\sigma_b-\gamma\sigma_a^2 -\gamma (x^2-2x\frac{1}{\sigma_+-\sigma_-}\int_{\sigma_-}^{\sigma^+}\delta(\sigma_b)d\sigma_b)
\end{align*}
and 
\begin{align*}
    (e^{\sigma_a}_\delta)'(x)=& \frac{1}{\sigma_+-\sigma_-}\int_{\sigma_-}^{\sigma^+}\big(\frac{1}{2}(1-\Phi(\frac{x+\delta(\sigma_b)}{\Sigma_\rho}))+\frac{1}{\Sigma_\rho}(-x\tilde Q_\rho+\delta(\sigma_b)Q_\rho)\Phi'(\frac{x+\delta(\sigma_b)}{\Sigma_\rho})\big)d\sigma_b\\
    &-2\gamma(x-\frac{1}{\sigma_+-\sigma_-}\int_{\sigma_-}^{\sigma^+}\delta(\sigma_b)d\sigma_b),
\end{align*}
recalling that $ \tilde Q_\rho:=\frac{\frac{1}{\sigma_a^2}-\frac{\rho}{\sigma_a\sigma_b}}{\frac{1}{\sigma_a^2}+\frac{1}{\sigma_b^2}-2\frac{\rho}{\sigma_a\sigma_b}}.$
If $\sigma_-=\sigma_+$, $\delta$ describes a NE if and only if
\begin{align*}
    \frac{1}{2}(1-\Phi(\frac{\sqrt{2}\delta(\sigma_+)}{\sigma_+}))=0,
\end{align*}
which cannot be obtained.\\

If $\sigma_-<\sigma_+$, we use the same method as with $\gamma=0$.
\end{proof}
As a consequence, a transaction fees based on the distance between the market mid-price and a player's mid-price does not provide an equilibrium.

\subsection{Quadratic fees in the players' half-spread: liquidity generation}
\label{section:nash}
Assume now that $P_{\infty|i}$ is known by the exchange. This can be done for example if the players quote both bid and ask prices (i.e. the game is played twice).\\

Suppose both players pay a penalty $\gamma(P^i-P_{\infty|i})^2$ with $\gamma>0$. The problem of Player $a$ is to maximize on $x=P^a-P_{\infty|a}$ the penalized payoff 
\begin{align*}
\E[(\frac{P_{\infty|a}+x+P_{\infty|b}-\delta(\boldsymbol\sigma_b)}{2}-P_\infty)\mathbf{1}_{P_{\infty|a}+x\leq P_{\infty|b}-\delta(\boldsymbol\sigma_b)}-\gamma(P^a-P_{\infty|a})^2| (P_{\infty|a},\boldsymbol\sigma_a)].
\end{align*}

 Then we have the following result proved in Appendix \ref{app::th::quadsp}.
\begin{theorem}\label{th::quadsp}
Let $\gamma>0$. Assume that $\rho\leq\frac{\sigma_-}{\sigma_+}$. Then
\begin{itemize}
    \item if $\sigma_-=\sigma_+$ there exists a unique NE with finite $\delta$. In that case $\delta(\sigma_+)$ is given by the unique positive root of the function $x\mapsto \frac{1}{2}(1-\Phi(\frac{\sqrt{2}x}{\sqrt{1-\rho}\sigma_+}))-2\gamma x$ 
    \item if $\sigma_-<\sigma_+$, let $ C=2\frac{1}{2\frac{1-\rho\frac{\sigma_+}{\sigma_-}}{1-(\frac{\sigma_-}{\sigma_+})^2}+1}\frac{1-\rho\frac{\sigma_+}{\sigma_-}}{1-(\frac{\sigma_-}{\sigma_+})^2}\sqrt{\frac{1}{2}\frac{(1-\rho^2)^2\sigma_-^4}{\sigma_+^2}}$: if $\gamma\geq \frac{1}{4C}(1-\Phi(\frac{C}{\sqrt{2}\sigma_+}))+\frac{1}{2\sqrt{2}\sigma_-}\Phi'(\frac{C}{\sqrt{2}\sigma_+})$ there exists a NE with $\delta$ bounded by $C$.\footnote{This bound is not tight and we refer to the proof of this result for more details.}
\end{itemize}
\end{theorem}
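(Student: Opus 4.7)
The proof splits along the degeneracy of the uncertainty support, but both branches start from the same first-order condition. Differentiating $e^{\sigma_a}_\delta$ (whose derivative in $x$ was already computed in the proof of Proposition \ref{prop::noNE}) and subtracting the penalty contribution $2\gamma x$, a non-negative $\delta$ is a Nash equilibrium if and only if, for every $\sigma_a\in[\sigma_-,\sigma_+]$,
\[
\frac{1}{\sigma_+-\sigma_-}\int_{\sigma_-}^{\sigma_+}\Bigl[\tfrac12\bigl(1-\Phi(\tfrac{\delta(\sigma_a)+\delta(\sigma_b)}{\Sigma_\rho})\bigr)+\tfrac{-\delta(\sigma_a)\tilde Q_\rho+\delta(\sigma_b)Q_\rho}{\Sigma_\rho}\Phi'(\tfrac{\delta(\sigma_a)+\delta(\sigma_b)}{\Sigma_\rho})\Bigr]d\sigma_b=2\gamma \delta(\sigma_a).
\]
The $-\gamma x^2$ term makes $e^{\sigma_a}_\delta$ coercive, hence a maximiser exists; under $\rho\leq\sigma_-/\sigma_+$ both $Q_\rho,\tilde Q_\rho$ are non-negative, so $(e^{\sigma_a}_\delta)'(0)\geq 0$ and the preceding lemma places a maximiser in $\R_+$.

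\textbf{Case $\sigma_-=\sigma_+$.} All uncertainties collapse to $\sigma_+$; then $\Sigma_\rho=\sigma_+\sqrt{2(1-\rho)}$, $Q_\rho=\tilde Q_\rho=(2\sigma_+)^{-1}$, so the cross term in the integrand vanishes at the symmetric point $x=\delta(\sigma_+)$ and the Nash condition reduces to $\tfrac12\bigl(1-\Phi(\tfrac{\sqrt2 x}{\sqrt{1-\rho}\sigma_+})\bigr)-2\gamma x=0$. The left-hand side is $1/4$ at $x=0$, tends to $-\infty$ as $x\to+\infty$, and has a strictly negative derivative; this yields a unique positive root, which is the NE.

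\textbf{Case $\sigma_-<\sigma_+$.} The plan is a Schauder fixed point argument on
\[
X_C=\{\delta\in C([\sigma_-,\sigma_+])\ :\ 0\leq\delta\leq C\},
\]
a convex, closed, bounded subset of $C([\sigma_-,\sigma_+])$ endowed with the sup norm, with best-response operator $T:X_C\to C([\sigma_-,\sigma_+])$ defined by $T\delta(\sigma_a)=\max\{x\geq 0: (e^{\sigma_a}_\delta)'(x)=0\}$. Three verifications are then needed: (i) $T$ is well-defined and $\R_+$-valued, which follows from coercivity together with $(e^{\sigma_a}_\delta)'(0)\geq 0$ under the sign condition on $\rho$; (ii) $T(X_C)\subseteq X_C$, which is the heart of the argument and uses the threshold on $\gamma$: bounding $\Sigma_\rho\leq\sqrt{2}\sigma_+$ and $|\tilde Q_\rho|/\Sigma_\rho\leq (2\sqrt{2}\sigma_-)^{-1}$ and exploiting the monotonicity of $1-\Phi$ and the maximum of $\Phi'$, one majorises $(e^{\sigma_a}_\delta)'(C)$ by a quantity whose non-positivity is exactly the hypothesis $\gamma\geq \tfrac{1}{4C}(1-\Phi(\tfrac{C}{\sqrt2\sigma_+}))+\tfrac{1}{2\sqrt2\sigma_-}\Phi'(\tfrac{C}{\sqrt2\sigma_+})$, hence any maximiser lies in $[0,C]$; (iii) $T$ is continuous on $X_C$ and $T(X_C)$ is equicontinuous, which follows from continuity of the integrand in $(\delta,\sigma_a,x)$ jointly with a uniform strict crossing of zero (guaranteed by the $-2\gamma$ contribution of the penalty to $\partial_x(e^{\sigma_a}_\delta)'$ near the maximiser), and from a uniform Lipschitz bound obtained by implicit differentiation of $(e^{\sigma_a}_\delta)'(T\delta(\sigma_a))=0$ in $\sigma_a$. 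Arzelà--Ascoli then gives pre-compactness of $T(X_C)$, and Schauder's theorem produces a fixed point, which is the announced Nash equilibrium bounded by $C$.

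\textbf{Main obstacle.} Step (ii) is the delicate one: one needs a sharp enough uniform majorant of $(e^{\sigma_a}_\delta)'(C)$ in $(\sigma_a,\delta)$, and the specific expression of $C$ in the theorem is essentially what falls out after optimising this majorant over $(\sigma_a,\sigma_b)\in[\sigma_-,\sigma_+]^2$ and solving for the minimal admissible $\gamma$; the whole interplay between $\Sigma_\rho,Q_\rho,\tilde Q_\rho$ and the assumption $\rho\leq\sigma_-/\sigma_+$ resurfaces here, and the footnote stating that the bound is not tight is consistent with a non-sharp but explicit extremisation. Coercivity, continuity of $T$ and the invocation of Schauder are comparatively routine once (ii) is in place.
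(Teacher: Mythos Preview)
Your overall architecture matches the paper's: reduce to a first-order condition, treat the degenerate case directly, and in the non-degenerate case run a Schauder argument on the best-response operator restricted to $\{0\le\delta\le C\}$. The degenerate case is fine (minor slip: with $\sigma_a=\sigma_b=\sigma_+$ one has $Q_\rho=\tilde Q_\rho=\tfrac12$, not $(2\sigma_+)^{-1}$, but the cancellation you use is unaffected).

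The genuine gap is in the non-degenerate case, and it concerns the role of $C$. You assert that the specific constant $C$ ``falls out after optimising the majorant'' of $(e^{\sigma_a}_\delta)'(C)$; this is not how the paper obtains it, and your route does not close. In the paper, $C$ is produced by a \emph{second-derivative} computation: one writes out $(e^{\sigma_a}_\delta)''(x)$, observes that for each $\sigma_b$ the integrand is a quadratic in $x$ which is non-positive on an explicit interval, and chooses $C$ so that the intersection of these intervals contains $[0,C]$. This yields strict concavity of $e^{\sigma_a}_\delta$ on $[0,C]$, uniformly in $\sigma_a$ and in $\delta\in X_C$. Only \emph{after} $C$ is fixed in this way does the paper impose the lower bound on $\gamma$, and that bound is used to force $(e^{\sigma_a}_\delta)'(c)\le 0$ for \emph{all} $c\ge C$ (not just at $c=C$), which together with concavity on $[0,C]$ pins the unique maximiser inside $(0,C)$.

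Without that concavity step your argument breaks at two places. First, your best-response $T\delta(\sigma_a)=\max\{x\ge 0:(e^{\sigma_a}_\delta)'(x)=0\}$ need not be the argmax and need not be single-valued, so $T$ is not well defined as a map into $C([\sigma_-,\sigma_+])$. Second, your continuity claim rests on ``the $-2\gamma$ contribution to $\partial_x(e^{\sigma_a}_\delta)'$'' making the crossing strict; but the integral part of $(e^{\sigma_a}_\delta)''$ contains the term $\tfrac{1}{\Sigma_\rho^2}\tilde Q_\rho x^2$ which is \emph{positive} and, for $x$ beyond a certain scale, dominates $-2\gamma$ under the stated hypothesis on $\gamma$. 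The paper circumvents this precisely by confining attention to $[0,C]$ where the integrand is shown to be non-positive independently of $\gamma$, and then leverages the resulting strong concavity (second derivative $\le -2\gamma$) to run an epi-convergence argument for continuity of the argmax. In short: add the second-derivative analysis that manufactures $C$, and the rest of your outline goes through essentially as in the paper.
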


\begin{remark}
From the proof of Proposition \ref{prop::penquaderr} we can also note that combining the two previously mentioned penalties leads to a NE. Intuitively, and looking at the derivative of $e_\delta^{\sigma_i}$, adding a quadratic penalty in the mid-price error (as in Subsection \ref{subsec::quadfeesmidprice}) to a quadratic penalty in the spread (as in the current subsection) would tend to reduce the variance of the spread, which could be beneficial, as we have seen in Proposition \ref{prop::stats}.
\end{remark}

\subsection{Optimal transaction fees: the exchange's perspective}
We now consider that the exchange receives quadratic fees in the players' half-spread as presented in Section \ref{section:nash}. Recall that this case generates trade between players. For simplicity, we only consider the case $\sigma_-=\sigma_+$ in this section. The optimal PnL of the exchange becomes
\begin{align}\label{eq::exch_prob}
   \underset{\gamma>0}{\sup}\, \gamma \mathbb E[(P^a-P_{\infty|a})^2+ (P^b-P_{\infty|b})^2]= 2\underset{\gamma>0}{\sup}\, \gamma \delta(\sigma_+)^2
\end{align}
where $\delta$ is the strategy chosen by the players, with which they reach a NE in their game with quadratic penalties with parameter $\gamma$. We can compute the optimal $\gamma$ based on the following theorem.

\begin{theorem}\label{th::excheasy}
Assume $\sigma_-=\sigma_+$. Then \eqref{eq::exch_prob} has a unique solution $\gamma^*$. It is given by $\gamma^*=\frac{\Phi'(y^*)}{2\sqrt{2}\sigma_+}$ where $y^*$ is the unique solution to $1-\Phi(y^*)-y^*\Phi'(y^*)=0$. In this case we also have $\delta(\sigma_+)=\frac{\sigma y^*}{\sqrt{2}}$.
\end{theorem}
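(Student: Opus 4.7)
The plan is to reduce the exchange's optimization to a one-dimensional variational problem after an appropriate change of variable, then carry out a standard monotonicity and concavity analysis.

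First I would use Theorem \ref{th::quadsp} in the case $\sigma_-=\sigma_+$: the NE half-spread $\delta=\delta(\sigma_+)$ is the unique positive root of
\begin{equation*}
G_\gamma(x)\;:=\;\tfrac{1}{2}\bigl(1-\Phi(\tfrac{\sqrt{2}x}{\sqrt{1-\rho}\sigma_+})\bigr)-2\gamma x.
\end{equation*}
This implicitly defines a smooth decreasing function $\gamma\mapsto\delta(\gamma)$. I would then perform the change of variable $y=\frac{\sqrt{2}\,\delta(\gamma)}{\sqrt{1-\rho}\sigma_+}$. The equation $G_\gamma(\delta)=0$ becomes
\begin{equation*}
\gamma \;=\; \frac{1-\Phi(y)}{2\sqrt{2}\sqrt{1-\rho}\sigma_+\, y},
\end{equation*}
which is a strictly decreasing bijection from $(0,\infty)$ (in $y$) onto $(0,\infty)$ (in $\gamma$), so the supremum in \eqref{eq::exch_prob} may equivalently be taken over $y>0$.

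Next I would rewrite the objective. Since $\delta=\frac{\sqrt{1-\rho}\sigma_+ y}{\sqrt{2}}$, direct substitution yields
\begin{equation*}
\gamma\,\delta(\sigma_+)^2 \;=\; \frac{\sqrt{1-\rho}\,\sigma_+}{4\sqrt{2}}\, f(y),\qquad f(y):=y\bigl(1-\Phi(y)\bigr).
\end{equation*}
Hence \eqref{eq::exch_prob} is equivalent to maximizing $f$ on $(0,\infty)$. The function $f$ is continuous and non-negative, with $f(0)=0$, $f>0$ on $(0,\infty)$, and $f(y)\to 0$ as $y\to\infty$ (by Mills' ratio, $1-\Phi(y)\sim\Phi'(y)/y$), so it attains its maximum at an interior critical point $y^*$ characterized by $f'(y^*)=1-\Phi(y^*)-y^*\Phi'(y^*)=0$.

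The main technical step is uniqueness of $y^*$. Using $\Phi''(y)=-y\,\Phi'(y)$, I would compute
\begin{equation*}
f''(y)\;=\;(y^2-2)\,\Phi'(y),
\end{equation*}
so $f'$ is strictly decreasing on $(0,\sqrt{2})$ and strictly increasing on $(\sqrt{2},\infty)$. Combined with $f'(0)=\tfrac{1}{2}>0$ and $f'(y)\to 0^-$ as $y\to\infty$ (again by Mills' ratio, which shows the sign of $f'$ near $+\infty$), this forces $f'$ to vanish at exactly one point, lying in $(0,\sqrt{2})$. Finally, substituting the optimality relation $1-\Phi(y^*)=y^*\Phi'(y^*)$ into the expression for $\gamma$ gives
\begin{equation*}
\gamma^*\;=\;\frac{\Phi'(y^*)}{2\sqrt{2}\sqrt{1-\rho}\,\sigma_+},\qquad \delta(\sigma_+)\;=\;\frac{\sqrt{1-\rho}\,\sigma_+\,y^*}{\sqrt{2}},
\end{equation*}
matching the stated formulas (with the convention $\sigma=\sqrt{1-\rho}\,\sigma_+$). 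The main obstacle is the uniqueness claim, which is handled cleanly via the $f''$ sign analysis together with the asymptotic behaviour of $f'$ at $+\infty$; everything else is bookkeeping on the implicit function defined by $G_\gamma$.
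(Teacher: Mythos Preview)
Your proof is correct and follows the same approach as the paper: reparametrize the exchange's problem from $\gamma$ to $y=\sqrt{2}\delta/(\sqrt{1-\rho}\,\sigma_+)$, reduce to maximizing $f(y)=y(1-\Phi(y))$, and identify the optimum via the first-order condition. You are actually more thorough than the paper on two points: you justify uniqueness of $y^*$ through the sign analysis of $f''(y)=(y^2-2)\Phi'(y)$ together with the Mills-ratio asymptotics of $f'$, whereas the paper simply asserts uniqueness; and you correctly carry the $\sqrt{1-\rho}$ factor from Theorem~\ref{th::quadsp}, which the paper's own proof silently drops.
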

\begin{proof}
The optimization problem is
\begin{align*}
    \underset{\gamma>0}{\sup}\, \gamma \delta^2 \text{ where }\delta\text{ solves }\frac{1}{2}(1-\Phi(\frac{\sqrt{2}\delta}{\sigma_+}))-2\gamma\delta=0.
\end{align*}
As $\gamma$ describes $\R^*_+$, the associated $\delta$ describes also $\R_+^*$ so the problem can be rewritten
\begin{align*}
    \underset{\delta>0}{\sup}\, \frac{1}{4}(1-\Phi(\frac{\sqrt{2}\delta}{\sigma_+}))\delta .
\end{align*}
With the ansatz $\delta=\frac{\sigma y}{\sqrt{2}}$, $y>0$ we find from the first order optimality condition that the optimum $y$ is unique and that it solves $1-\Phi(y)-y\Phi'(y)=0$.
%
\end{proof}

\section{Numerical results}\label{sec::num}
\subsection{Suggestions for parameters calibration in practice}
In this model, the main parameters to be calibrated are $\sigma_-$, $\sigma_+$ and $\rho$. 

\begin{itemize}
\item The parameter $\sigma_+$ is the confidence level of the player suffering from the highest level of uncertainty on the efficient price. We can choose $\sigma_+$ equals to the volatility between two auctions, while $\sigma_-$ can be chosen small enough. 
\item The ratio $\frac{\sigma_-}{\sigma_+}$ can be seen as a measure of the symmetry of the ability of the players to estimate the efficient price. If $\sigma_-=\sigma_+$ then no player is particularly accurate compared to the others. In this case $\rho=1$ describes the situation where the players have the same forecast. The case $\rho\neq 1$ would then describe a market with some heterogeneity in the beliefs (trend-follower versus contrarian for example). If $\sigma_-<\sigma_+$ then some player might have better information. The ratio $\frac{\sigma_-}{\sigma_+}$ is similar to the ratio $r$ in \cite{saliba}. The second and fourth results of Proposition \ref{prop::stats} suggest that this ratio should be calibrated on the market using the observed spreads as well as $\rho$.
\end{itemize}

In the following we take $\sigma_- = 0.1$, $\sigma_+=1.1$ and $\rho=0$.

\subsection{Optimal strategies}
In Figure \ref{optimaldelta}, we plot  the optimal strategies $\delta$ as a function of $\sigma_i$ for different values of $\gamma$.
\begin{figure}[H]
    \centering
    \includegraphics[width=12cm]{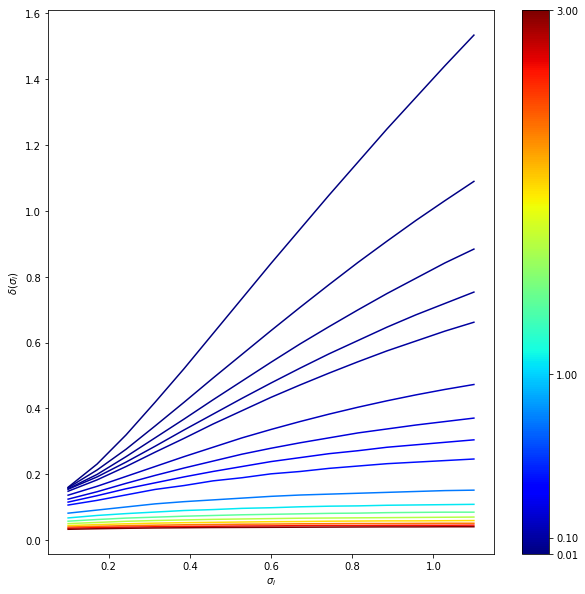}
    \caption{Optimal strategies $\delta(\sigma_i)$ in a NE. The colorbar gives possible values for the parameter $\gamma$.}\label{optimaldelta}
\end{figure}
We first observe that the optimal strategies are increasing in $\sigma_i$. This result is natural: the more uncertain the market, the greater the spread. Market makers increase their spread when the volatility is high. We also note that the optimal strategies of the market makers are not linear functions of the uncertainty. Finally, we see that the greater the fees, the smaller the spreads. This is a result of the incentive mechanism put in place by the exchange to reduce the difference between the price sent by the market makers and their own private estimates. 

\subsection{Optimal incentives}
We now turn to the optimal fees propose by the exchange to increase the liquidity. We compute in Figure \ref{figureoptfee} $\E[\gamma\delta(\sigma_i)^2]$ for multiple choices of $\gamma$ in NEs.
\begin{figure}[H]
    \centering
    \includegraphics{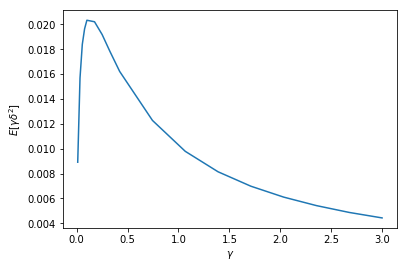}
    \caption{Penalties paid $\E[\gamma\delta(\sigma_i)^2]$.}\label{figureoptfee}
\end{figure}
With this set of parameters, the condition ensuring the existence of a Nash equilibrium is $\gamma\geq 1.5$. Therefore, based on Figure \ref{figureoptfee}, the exchange should pick $\gamma\approx 1.5$. It is  larger than the optimal $\gamma\simeq 0.1$ given by Theorem \ref{th::excheasy} if $\sigma_-=\sigma_+=1.1$.

\section{Extension to half-linear controls}\label{sec::ext}

\comment{\subsection{Adding noise traders}

Should be add noise traders? For example it could model the fact that small spreads might attract new players.\\

We consider an extension in which the players can, with some probability $p$, have a confidence $\sigma_{++}>\sigma_+$ in his prior. This player is penalized if he does not manage to trade. We have now
\begin{align*}
    e^{\sigma_a}_\delta(x)= &(1-p)\frac{1}{\sigma_+-\sigma_-}\int_{\sigma_-}^{\sigma^+}\big(\frac{x-\delta(\sigma_b)}{2}(1-\Phi(\frac{x+\delta(\sigma_b)}{\sqrt{\sigma_a^2+\sigma_b^2}}))+\sqrt{\sigma_a^2+\sigma_b^2}(\frac{1}{2}-\frac{\frac{1}{\sigma_b^2}}{\frac{1}{\sigma_a^2}+\frac{1}{\sigma_b^2}})\Phi'(\frac{x+\delta(\sigma_b)}{\sqrt{\sigma_a^2+\sigma_b^2}})\big)d\sigma_b\\
    &+p\big(\frac{x-\delta(\sigma_{++})}{2}(1-\Phi(\frac{x+\delta(\sigma_{++})}{\sqrt{\sigma_a^2+\sigma_{++}^2}}))+\sqrt{\sigma_a^2+\sigma_{++}^2}(\frac{1}{2}-\frac{\frac{1}{\sigma_{++}^2}}{\frac{1}{\sigma_a^2}+\frac{1}{\sigma_{++}^2}})\Phi'(\frac{x+\delta(\sigma_{++})}{\sqrt{\sigma_a^2+\sigma_{++}^2}})\big)\\
    &-\gamma\mathbf{1}_{\sigma_a=\sigma_{++}}((1-p)\frac{1}{\sigma_+-\sigma_-}\int_{\sigma_-}^{\sigma^+}(1-\Phi(\frac{x+\delta(\sigma_b)}{\sqrt{\sigma_a^2+\sigma_b^2}}))d\sigma_b+p(1-\Phi(\frac{x+\delta(\sigma_{++})}{\sqrt{\sigma_a^2+\sigma_{++}^2}})))
\end{align*}

\subsection{Half-linear controls}}

We consider a variant of our model to better account for the fact that each trader can use multiple limit orders. Player $a$ now has a demand function $p\mapsto K^a(p-P^a)\mathbf{1}_{p\geq P^a}$ and controls $P^a$ where $K^a$ is constant. Similarly Player $b$ has an offer function $p\mapsto K^b(P^b-p)\mathbf{1}_{p\leq P^b}$ and controls $P^b$, where $K^b$ is constant.\\

A trade once again happens if $P^a\leq P^b$, but in this case we now have the trade price which equals $\frac{K^aP^a+K^bP^b}{K^a+K^b}$ and the traded volume is $\frac{K^aK^b}{K^a+K^b}(P^b-P^a)$. So Player $a$ maximizes, for any $p_{\infty|a}$ and $\sigma_a$, his expected gain
\begin{align*}
    \E[\frac{K^aK^b}{K^a+K^b}(P^b-P^a)(\frac{K^aP^a+K^bP^b}{K^a+K^b}-P_\infty)\mathbf{1}_{P^a\leq P^b}|(P_{\infty|a}=p_{\infty|a},\boldsymbol\sigma_a=\sigma_a)].
\end{align*}

We have a first negative result proved in Appendix \ref{app::noNEk}.

\begin{proposition}\label{prop::noNEk}
If $K^a=K^b=\bar k>0$ there is no NE. 
\end{proposition}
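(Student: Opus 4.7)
The plan is to mirror the no-equilibrium argument of Proposition~\ref{prop::noNE}: first derive the $v\to\infty$ limit of Player $a$'s objective, then write the first-order condition at $x=\delta(\sigma_a)$ for every $\sigma_a$, and finally produce a contradiction by integrating that FOC against $\sigma_a$ and symmetrizing in $\sigma_a \leftrightarrow \sigma_b$.

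For the limit step I would follow the proof of Proposition~\ref{prop::lim}. Setting $x = P^a - P_{\infty|a}$, $A = x-\delta(\sigma_b)$, $B = x+\delta(\sigma_b)$, and $Z = \boldsymbol\sigma_b\epsilon_b - \sigma_a\epsilon_a$, one has in the limit $\tfrac{P^a+P^b}{2} - P_\infty = \tfrac{A + \sigma_a\epsilon_a + \boldsymbol\sigma_b\epsilon_b}{2}$ and $P^b - P^a = Z-B$. Decomposing $\sigma_a\epsilon_a + \boldsymbol\sigma_b\epsilon_b = \alpha Z + W$ with $\alpha := (\boldsymbol\sigma_b^2 - \sigma_a^2)/(\boldsymbol\Sigma_\rho^b)^2$ and $W\perp Z$ centered, the factor $W$ disappears from the expectation. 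Conditional on $\boldsymbol\sigma_b$, the standard Gaussian identities $\E[Y\mathbf{1}_{Y\geq t}] = \Phi'(t)$ and $\E[Y^2\mathbf{1}_{Y\geq t}] = 1-\Phi(t)+t\Phi'(t)$ combined with the relation $t\Sigma_\rho^b = B$ yield
\begin{align*}
e^{\sigma_a}_\delta(x) = \frac{\bar k}{4(\sigma_+-\sigma_-)}\int_{\sigma_-}^{\sigma_+}\Big\{\Sigma_\rho^b(x-\delta(\sigma_b))\Phi'(t) + (1-\Phi(t))\big[\sigma_b^2 - \sigma_a^2 + \delta(\sigma_b)^2 - x^2\big]\Big\} d\sigma_b,
\end{align*}
with $t = (x+\delta(\sigma_b))/\Sigma_\rho^b$. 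Differentiating in $x$, the algebraic identity $(\Sigma_\rho^b)^2 - (\sigma_b^2-\sigma_a^2) = 2\sigma_a(\sigma_a-\rho\sigma_b)$ forces the cancellation of the $AB$-dependent cross-terms in the $\Phi'(t)$-coefficient (those coming from the first bracket exactly cancel those coming from $-AB$ inside the second bracket), leaving the compact form
\begin{align*}
(e^{\sigma_a}_\delta)'(x) = \frac{\bar k}{2(\sigma_+-\sigma_-)}\int_{\sigma_-}^{\sigma_+}\Big\{\frac{\sigma_a(\sigma_a-\rho\sigma_b)}{\Sigma_\rho^b}\Phi'(t) - x(1-\Phi(t))\Big\} d\sigma_b.
\end{align*}

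Assume now that $\delta$ is a NE. If $\sigma_- < \sigma_+$, the condition $(e^{\sigma_a}_\delta)'(\delta(\sigma_a))=0$ must hold for every $\sigma_a\in[\sigma_-,\sigma_+]$. Integrating against $\sigma_a$, relabeling $\sigma_a\leftrightarrow\sigma_b$ in the resulting double integral (which leaves both $\Sigma_\rho^b$ and $\tau := (\delta(\sigma_a)+\delta(\sigma_b))/\Sigma_\rho^b$ invariant), and adding the two versions collapses everything, through $\sigma_a^2+\sigma_b^2-2\rho\sigma_a\sigma_b = (\Sigma_\rho^b)^2$ and $\delta(\sigma_a)+\delta(\sigma_b) = \tau\Sigma_\rho^b$, into
\begin{align*}
\int_{\sigma_-}^{\sigma_+}\!\int_{\sigma_-}^{\sigma_+} \Sigma_\rho^b\,\big[\Phi'(\tau) - \tau(1-\Phi(\tau))\big]\, d\sigma_a\, d\sigma_b = 0.
\end{align*}
Since $\delta$ takes values in $\R_+^*$, $\tau > 0$ on the whole square, and the classical Mills ratio inequality $\Phi'(\tau) > \tau(1-\Phi(\tau))$ makes the integrand strictly positive, a contradiction. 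The degenerate case $\sigma_- = \sigma_+$ follows by the same inequality applied at the unique available point $\sigma_a = \sigma_+$: the FOC reduces to $\Phi'(u) = u(1-\Phi(u))$ with $u = \sqrt 2\,\delta(\sigma_+)/(\sigma_+\sqrt{1-\rho})$, which admits no positive solution.

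The main obstacle is purely computational. The extra volume factor $(P^b-P^a)$ in the Player $a$ payoff introduces a genuinely new $-x^2$ contribution through $-AB(1-\Phi(t))$ as well as additional $\Phi'(t)$ cross-terms, and it is not a priori clear that $(e^{\sigma_a}_\delta)'$ admits any particularly clean form. The key is the identity $(\Sigma_\rho^b)^2 - (\sigma_b^2-\sigma_a^2) = 2\sigma_a(\sigma_a-\rho\sigma_b)$, which both cancels the $AB$ cross-terms and leaves the derivative in a form compatible with the $\sigma_a \leftrightarrow \sigma_b$ symmetrization used in Proposition~\ref{prop::noNE}. Once that compact expression is secured, the Mills ratio conclusion is essentially immediate.
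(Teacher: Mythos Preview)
Your proof is correct and follows the same line as the paper's: you derive the same closed form for $(e^{\sigma_a}_\delta)'(x)$ (your coefficient $\sigma_a(\sigma_a-\rho\sigma_b)/\Sigma_\rho$ is exactly the paper's $\Sigma_\rho Q_\rho$), integrate the first-order condition over $\sigma_a$, symmetrize in $\sigma_a\leftrightarrow\sigma_b$, and conclude via the Mills ratio inequality $\tau(1-\Phi(\tau))<\Phi'(\tau)$. The only cosmetic differences are your more explicit derivation of $e^{\sigma_a}_\delta$ via the orthogonal decomposition $\sigma_a\epsilon_a+\boldsymbol\sigma_b\epsilon_b=\alpha Z+W$ and your separate treatment of the degenerate case $\sigma_-=\sigma_+$.
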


As previously this result can be mitigated if 
both players pay a penalty $\gamma(P^i-P_{\infty|i})^2$. We have the following result proved in Appendix \ref{app::thmext}.
\begin{theorem}\label{th::quadspK}
Let $\gamma>0$. Assume that $\rho\leq\frac{\sigma_-}{\sigma_+}$ and that the slopes are fixed $K^a=K^b=\bar k$. Then
\begin{itemize}
    \item if $\sigma_-=\sigma_+$ there exist a unique NE with finite $\delta$. In that case $\delta(\sigma_+)$ is given by the unique positive root of the function $x\mapsto \frac{\bar k}{2}(-x(1-\Phi(\frac{\sqrt{2}x}{\sqrt{1-\rho}\sigma_+}))+\frac{\sigma_+\sqrt{1-\rho}}{\sqrt{2}}\Phi'(\frac{\sqrt{2}x}{\sqrt{1-\rho}\sigma_+}))-2\gamma x$ 
    \item if $\sigma_-<\sigma_+$, let $\gamma\geq \frac{\bar k}{2}\max(-(1-\Phi(z))+z\Phi'(z))$ there exists a NE with $\delta$ bounded by $C=\frac{\bar k}{4\gamma}\frac{1}{\sigma_+-\sigma_-}\int_{\sigma_-}^{\sigma_+}\bigg(\Sigma_\rho Q_\rho\Phi'(0)\bigg)d\sigma_b$.
\end{itemize}
\end{theorem}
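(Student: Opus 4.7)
The plan is to follow the same template as Theorem \ref{th::quadsp}, adapting every step to the presence of the extra volume factor $(P^b-P^a)$ in the payoff. First, I would compute the limiting expected gain $e^{\sigma_a}_\delta$ using the representation of Remark \ref{rem:finitedelta}. With $P^a=P_{\infty|a}+x$ and $P^b=P_{\infty|b}-\delta(\boldsymbol\sigma_b)$, one has $P^b-P^a=\boldsymbol\Sigma^b_\rho\tilde\epsilon_{ab}-(x+\delta(\boldsymbol\sigma_b))$, and $\frac{P^a+P^b}{2}-P_\infty$ is given by the three-term expression in that remark. Because $\tilde\epsilon_\infty$ is independent of both the trading indicator and of $P^b-P^a$, its cross-terms vanish in the expectation, and the remaining integrals reduce to the elementary Gaussian moments $\E[\tilde\epsilon_{ab}^j\mathbf 1_{\tilde\epsilon_{ab}\ge z}]$ for $j=0,1,2$, which yield closed forms in $\Phi$ and $\Phi'$. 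Adding the penalty $-\gamma x^2$ gives an explicit integral expression for $e^{\sigma_a}_\delta(x)$.

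Second, I would differentiate in $x$ and characterize Nash equilibria by the first-order condition $(e^{\sigma_a}_\delta)'(\delta(\sigma_a))=0$. Under $\rho\le \sigma_-/\sigma_+$, the existence of an interior positive maximizer follows from the lemma preceding Proposition \ref{prop::noNE} together with the fact that the additional $-2\gamma x$ contribution keeps the derivative strictly decreasing at infinity. Regularity of the integrand in $x$ and in $\delta$ makes the implicit relation defining the best response well-posed.

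Third, in the case $\sigma_-=\sigma_+$, the outer integral collapses to a single term and the FOC becomes a scalar equation at $x=\delta(\sigma_+)$, giving the stated function $x\mapsto\frac{\bar k}{2}\bigl(-x(1-\Phi(\tfrac{\sqrt 2 x}{\sqrt{1-\rho}\sigma_+}))+\frac{\sigma_+\sqrt{1-\rho}}{\sqrt 2}\Phi'(\tfrac{\sqrt 2 x}{\sqrt{1-\rho}\sigma_+})\bigr)-2\gamma x$. At $x=0$ this equals $\frac{\bar k}{2}\cdot\frac{\sigma_+\sqrt{1-\rho}}{\sqrt 2}\Phi'(0)>0$, it tends to $-\infty$ as $x\to\infty$, and a direct differentiation shows it is strictly decreasing on $\R_+$ (the $-\frac{\bar k}{2}(1-\Phi)$ term, together with the quadratic-exponential tail of $\Phi'$ and the $-2\gamma$ from the penalty, dominate the sign). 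Intermediate value theorem then gives existence and uniqueness of a positive root.

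Fourth, in the case $\sigma_-<\sigma_+$, I would apply a Schauder-type fixed point argument on the convex compact set $\mathcal K_C=\{\delta\in C([\sigma_-,\sigma_+];[0,C])\}$, where $C$ is the bound given in the statement. Define the best-response operator $T(\delta)(\sigma_a):=\arg\max_{x\ge 0}e^{\sigma_a}_\delta(x)$. Continuity of $T$ follows from the smooth dependence of the integrand on $\delta$ and uniqueness of the unconstrained maximizer; equicontinuity of $T(\mathcal K_C)$ in $\sigma_a$ follows from the implicit function theorem applied to the FOC. The hard part — and the main obstacle of the proof — is verifying the self-map property $T(\mathcal K_C)\subset\mathcal K_C$: one writes the FOC at $T(\delta)(\sigma_a)$, uses the elementary bound $\Sigma_\rho Q_\rho \Phi'(z)\le \Sigma_\rho Q_\rho \Phi'(0)$ to dominate the positive part (yielding the stated $C$), and bounds the remaining negative term by $\tfrac{\bar k}{2}\max_{z\ge 0}(-(1-\Phi(z))+z\Phi'(z))$. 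The threshold $\gamma\ge\frac{\bar k}{2}\max_z(-(1-\Phi(z))+z\Phi'(z))$ is precisely what makes this estimate close so that the root of the FOC is non-negative and bounded by $C$. Schauder's theorem then yields a fixed point, which is the announced Nash equilibrium.
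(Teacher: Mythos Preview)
Your overall strategy---compute $e^{\sigma_a}_\delta$ and its derivatives, treat $\sigma_-=\sigma_+$ by a scalar root, and for $\sigma_-<\sigma_+$ run Schauder on the best-response map over $\mathcal K_C$---is exactly the paper's. The first bullet is fine: your monotonicity check goes through (note $\Phi''(y)=-y\Phi'(y)$ cancels the $+y\Phi'(y)$ term, leaving $-\tfrac{\bar k}{2}(1-\Phi(y))-2\gamma<0$).

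For the second bullet, however, you misassign the role of the threshold on $\gamma$. In the paper the expression $-(1-\Phi(z))+z\Phi'(z)$ arises in the \emph{second} derivative,
\[
(e^{\sigma_a}_\delta)''(x)=\frac{1}{\sigma_+-\sigma_-}\int_{\sigma_-}^{\sigma_+}\frac{\bar k}{2}\Big(-(1-\Phi(\tfrac{x+\delta(\sigma_b)}{\Sigma_\rho}))+\tfrac{1}{\Sigma_\rho}(\tilde Q_\rho x-Q_\rho\delta(\sigma_b))\Phi'(\tfrac{x+\delta(\sigma_b)}{\Sigma_\rho})\Big)d\sigma_b-2\gamma,
\]
and the condition $\gamma\ge \tfrac{\bar k}{2}\max_z(-(1-\Phi(z))+z\Phi'(z))$ is precisely what forces $(e^{\sigma_a}_\delta)''\le 0$ on all of $\R_+$, i.e.\ \emph{global strict concavity}. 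This is what delivers the ``uniqueness of the unconstrained maximizer'' you invoke for continuity of $T$; you never actually establish it. The self-map bound $T(\delta)(\sigma_a)\le C$ comes from the \emph{first} derivative alone: dropping the non-positive term $-x(1-\Phi(\cdot))$ and using $\Phi'(\cdot)\le\Phi'(0)$ gives $(e^{\sigma_a}_\delta)'(x)\le \tfrac{\bar k}{2}\tfrac{1}{\sigma_+-\sigma_-}\int\Sigma_\rho Q_\rho\Phi'(0)\,d\sigma_b-2\gamma x$, which is $\le 0$ for $x\ge C$ by the very definition of $C$. No ``remaining negative term'' bounded by $\max_z(-(1-\Phi(z))+z\Phi'(z))$ enters here. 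Once you reassign the $\gamma$-threshold to concavity and use $C$ for the first-derivative sign, the Schauder argument (continuity via the strong-concavity/epi-convergence route of Appendix~\ref{app::th::quadsp}, or your implicit-function variant) closes as in the paper.
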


\appendix

\section{Proof of Proposition \ref{prop::lim}}\label{app::prop::lim}
In the following, we set $$\epsilon_{ab} = \frac{P_{\infty|b}-P_{\infty|a}}{\boldsymbol\Sigma_\rho}$$ and $$\epsilon_\infty = (P_\infty-\frac{\frac{1}{\boldsymbol\sigma_a^2}P_{\infty|a}+\frac{1}{\boldsymbol\sigma_b^2}P_{\infty|b}-2\frac{\rho}{\boldsymbol\sigma_a\boldsymbol\sigma_b}\frac{P_{\infty|a}+P_{\infty|b}}{2}}{\frac{1}{\boldsymbol\sigma_a^2}+\frac{1}{\boldsymbol\sigma_b^2}-2\frac{\rho}{\boldsymbol\sigma_a\boldsymbol\sigma_b}})\sqrt{\frac{1}{\boldsymbol\sigma_a^2}+\frac{1}{\boldsymbol\sigma_b^2}-2\frac{\rho}{\boldsymbol\sigma_a\boldsymbol\sigma_b}}/\sqrt{1-\rho^2}.$$ We have the following lemma.
\begin{lemma}
     Conditioned by $\boldsymbol\sigma_a=\sigma_a$ and $\boldsymbol\sigma_b=\sigma_b$,  $\epsilon_{ab}$ and $\epsilon_{\infty}$ are independent standard Gaussian variables.
\end{lemma}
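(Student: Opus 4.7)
The plan is to exploit the fact that, while the displayed expressions look intimidating, both $\epsilon_{ab}$ and $\epsilon_\infty$ simplify (after conditioning on $\boldsymbol\sigma_a=\sigma_a, \boldsymbol\sigma_b=\sigma_b$) to deterministic linear combinations of the two Gaussian noises $\epsilon_a,\epsilon_b$ alone. Once that is established, the lemma reduces to a finite‐dimensional Gaussian calculation.

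First I would unpack $\epsilon_{ab}$: since $P_{\infty|b}-P_{\infty|a}=\sigma_b\epsilon_b-\sigma_a\epsilon_a$, the efficient price $P_\infty$ cancels and $\epsilon_{ab}=(\sigma_b\epsilon_b-\sigma_a\epsilon_a)/\Sigma_\rho$. Next, the key observation for $\epsilon_\infty$ is that the weights $\alpha_a:=\tfrac{1}{\sigma_a^2}-\tfrac{\rho}{\sigma_a\sigma_b}$ and $\alpha_b:=\tfrac{1}{\sigma_b^2}-\tfrac{\rho}{\sigma_a\sigma_b}$ satisfy $\alpha_a+\alpha_b=W$, where $W:=\tfrac{1}{\sigma_a^2}+\tfrac{1}{\sigma_b^2}-2\tfrac{\rho}{\sigma_a\sigma_b}$. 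Hence the weighted average of $P_{\infty|a},P_{\infty|b}$ has weights summing to $1$, so
\[
P_\infty-\frac{\alpha_a P_{\infty|a}+\alpha_b P_{\infty|b}}{W}=-\frac{\alpha_a\sigma_a\epsilon_a+\alpha_b\sigma_b\epsilon_b}{W},
\]
again with $P_\infty$ cancelling. Therefore $\epsilon_\infty=-(\alpha_a\sigma_a\epsilon_a+\alpha_b\sigma_b\epsilon_b)/\sqrt{W(1-\rho^2)}$ is, like $\epsilon_{ab}$, a linear function of $(\epsilon_a,\epsilon_b)$.

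Because $(\epsilon_a,\epsilon_b)$ is jointly Gaussian with covariance matrix $\left(\begin{smallmatrix}1&\rho\\\rho&1\end{smallmatrix}\right)$ and is independent of $(\boldsymbol\sigma_a,\boldsymbol\sigma_b)$, the pair $(\epsilon_{ab},\epsilon_\infty)$ is jointly Gaussian conditionally on $\{\boldsymbol\sigma_a=\sigma_a,\boldsymbol\sigma_b=\sigma_b\}$. So independence amounts to unit variances plus zero covariance. The variance of $\epsilon_{ab}$ is $1$ from the very definition of $\Sigma_\rho$. For $\epsilon_\infty$, I would recognize $\frac{\alpha_a P_{\infty|a}+\alpha_b P_{\infty|b}}{W}$ as the GLS estimator $\mathbf 1^\top\Sigma^{-1}Y/\mathbf 1^\top\Sigma^{-1}\mathbf 1$ for $Y=(P_{\infty|a},P_{\infty|b})^\top$ with noise covariance $\Sigma=\left(\begin{smallmatrix}\sigma_a^2&\rho\sigma_a\sigma_b\\\rho\sigma_a\sigma_b&\sigma_b^2\end{smallmatrix}\right)$; its error variance is $(\mathbf 1^\top\Sigma^{-1}\mathbf 1)^{-1}=(1-\rho^2)/W$, and this is exactly the inverse of the scaling $W/(1-\rho^2)$ in $\epsilon_\infty$, giving variance $1$.

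For the covariance, I would just expand
\[
\mathrm{Cov}(\epsilon_{ab},\epsilon_\infty)=-\frac{1}{\Sigma_\rho\sqrt{W(1-\rho^2)}}\,\mathrm{Cov}(\sigma_b\epsilon_b-\sigma_a\epsilon_a,\;\alpha_a\sigma_a\epsilon_a+\alpha_b\sigma_b\epsilon_b),
\]
and use the identities $\sigma_a^2\alpha_a=1-\rho\sigma_a/\sigma_b$ and $\sigma_b^2\alpha_b=1-\rho\sigma_b/\sigma_a$ to show that the bracket reduces to $\rho\sigma_a\sigma_b(\alpha_a-\alpha_b)+\sigma_b^2\alpha_b-\sigma_a^2\alpha_a=0$. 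This bit of algebra is the only real obstacle, but it is routine once the simplification of $\epsilon_\infty$ has removed $P_\infty$ from the picture; the conceptual content of the lemma is really the statement that the GLS residual is orthogonal (hence, under Gaussianity, independent) of the contrast $P_{\infty|b}-P_{\infty|a}$.
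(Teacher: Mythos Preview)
Your proposal is correct and follows essentially the same route as the paper: both arguments observe that, after conditioning on $\boldsymbol\sigma_a=\sigma_a,\boldsymbol\sigma_b=\sigma_b$, the quantity $P_\infty$ cancels so that $\epsilon_{ab}$ and $\epsilon_\infty$ are deterministic linear combinations of $(\epsilon_a,\epsilon_b)$, after which the claim is a direct covariance computation. The paper simply records those two linear expressions and writes ``the conclusion follows''; you carry out the variance and covariance checks explicitly, and your GLS reading of $\epsilon_\infty$ is a pleasant gloss, but it is the same proof rather than a different one.
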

\begin{proof}
Given $\boldsymbol\sigma_a=\sigma_a$ and $\boldsymbol\sigma_b=\sigma_b$, we have
\begin{align*}
    \epsilon_{ab} =& \frac{\sigma_b\epsilon_b-\sigma_a\epsilon_a}{\Sigma_\rho},\\
    \epsilon_\infty =& -\frac{\frac{1}{\sigma_a}\epsilon_a+\frac{1}{\sigma_b}\epsilon_b-2\frac{\rho}{\sigma_a\sigma_b}\frac{\sigma_a\epsilon_a+\sigma_b\epsilon_b}{2}}{\sqrt{1-\rho^2}\sqrt{\frac{1}{\sigma_a^2}+\frac{1}{\sigma_b^2}-2\frac{\rho}{\sigma_a\sigma_b}}}
\end{align*}
and the conclusion follows.
\end{proof}

Conditional on $P_{\infty|a}$, $\epsilon_{ab}$ and $\epsilon_{\infty}$ are no longer independent, but they become independent when we take $v\to +\infty$. The following lemma comes directly from the conditional distribution of random Gaussian vectors.
\begin{lemma}\label{lem::condPa}
Conditioned by $P_{\infty|a}=P_{\infty|a}$, $\boldsymbol\sigma_a=\sigma_a$ and $\boldsymbol\sigma_b=\sigma_b$,  we have
\begin{align*}
    \begin{pmatrix}
    \Sigma_\rho\epsilon_{ab}\\
    -\sqrt{\frac{1}{\sigma_a^2}+\frac{1}{\sigma_b^2}-2\frac{\rho}{\sigma_a\sigma_b}}\epsilon_{\infty}
    \end{pmatrix}\sim\Nc\Bigg(\begin{pmatrix}
    P_{\infty|a}\frac{-\sigma_a^2+\rho\sigma_a\sigma_b}{v^2+\sigma_a^2}\\
    P_{\infty|a}\frac{\sqrt{1-\rho^2}}{v^2+\sigma_a^2}
    \end{pmatrix},\begin{pmatrix}
\Sigma_\rho^2-\frac{(-\sigma_a^2+\rho\sigma_a\sigma_b)^2}{v^2+\sigma_a^2} & \frac{(\sigma_a^2-\rho\sigma_a\sigma_b)\sqrt{1-\rho^2}}{v^2+\sigma_a^2} \\
\frac{(\sigma_a^2-\rho\sigma_a\sigma_b)\sqrt{1-\rho^2}}{v^2+\sigma_a^2} & \frac{1}{\sigma_a^2}+\frac{1}{\sigma_b^2}-2\frac{\rho}{\sigma_a\sigma_b}-\frac{1-\rho^2}{v^2+\sigma_a^2}
\end{pmatrix}\Bigg)\\
\end{align*}
and $(\epsilon_{ab}, \epsilon_{\infty})$ converges when $v\to \infty$ towards a centered Gaussian vector with unit variance.  
   
\end{lemma}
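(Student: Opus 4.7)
The statement is a direct application of the conditional distribution formula for multivariate Gaussians, so the plan is largely a matter of bookkeeping. First I would work conditionally on $\{\boldsymbol\sigma_a = \sigma_a,\boldsymbol\sigma_b = \sigma_b\}$, under which $(\epsilon_a,\epsilon_b,P_\infty)$ is jointly Gaussian: $(\epsilon_a,\epsilon_b)$ is centered with unit variances and correlation $\rho$, independent of $P_\infty\sim\Nc(0,v^2)$. Together with the identities $P_{\infty|a}=P_\infty+\sigma_a\epsilon_a$, $\Sigma_\rho\epsilon_{ab}=\sigma_b\epsilon_b-\sigma_a\epsilon_a$, and the explicit linear expression for $\epsilon_\infty$ in $(\epsilon_a,\epsilon_b)$ established in the preceding lemma, this shows that the triple $(P_{\infty|a},\Sigma_\rho\epsilon_{ab},-\sqrt{1/\sigma_a^2+1/\sigma_b^2-2\rho/(\sigma_a\sigma_b)}\,\epsilon_\infty)$ is a centered Gaussian vector.

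Next I would compute its covariance matrix. The diagonal entries come essentially for free: $\V(P_{\infty|a})=v^2+\sigma_a^2$, while $\V(\Sigma_\rho\epsilon_{ab})=\Sigma_\rho^2$ and $\V(-\sqrt{\cdots}\,\epsilon_\infty)=1/\sigma_a^2+1/\sigma_b^2-2\rho/(\sigma_a\sigma_b)$ follow from the first appendix lemma, which asserts that $\epsilon_{ab}$ and $\epsilon_\infty$ are independent standard Gaussians conditionally on $(\boldsymbol\sigma_a,\boldsymbol\sigma_b)$. For the cross-covariances, using that $P_\infty$ is independent of $(\epsilon_a,\epsilon_b)$, I would expand $\mathrm{Cov}(P_{\infty|a},\Sigma_\rho\epsilon_{ab})=\mathrm{Cov}(\sigma_a\epsilon_a,\sigma_b\epsilon_b-\sigma_a\epsilon_a)=-\sigma_a^2+\rho\sigma_a\sigma_b$, and an analogous computation gives $\mathrm{Cov}(P_{\infty|a},-\sqrt{\cdots}\,\epsilon_\infty)=(\sigma_a^2-\rho\sigma_a\sigma_b)\sqrt{1-\rho^2}$ after collecting the inverse-variance weights. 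Plugging these quantities into the standard conditional formulas $\mu_{Y|X}=\mu_Y+\Sigma_{YX}\Sigma_{XX}^{-1}(X-\mu_X)$ and $\Sigma_{Y|X}=\Sigma_{YY}-\Sigma_{YX}\Sigma_{XX}^{-1}\Sigma_{XY}$ with $X=P_{\infty|a}$ and $\Sigma_{XX}^{-1}=1/(v^2+\sigma_a^2)$ reproduces exactly the mean vector and covariance matrix displayed in the statement.

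Finally, for the limiting claim, I would observe that both the conditional mean and every correction term subtracted from the unconditional variances in the conditional covariance carry a factor $1/(v^2+\sigma_a^2)$, which vanishes as $v\to+\infty$. The limit covariance is therefore diagonal with entries $\Sigma_\rho^2$ and $1/\sigma_a^2+1/\sigma_b^2-2\rho/(\sigma_a\sigma_b)$, and dividing the two coordinates by their respective square roots yields that $(\epsilon_{ab},\epsilon_\infty)$ converges in distribution to a standard bivariate normal. The main obstacle is purely algebraic — verifying that the covariance with the $\epsilon_\infty$-coordinate collapses to $(\sigma_a^2-\rho\sigma_a\sigma_b)\sqrt{1-\rho^2}$ requires a careful expansion of the inverse-variance-weighted combination defining $\epsilon_\infty$ — but no genuinely new probabilistic input beyond Gaussian conditioning is needed.
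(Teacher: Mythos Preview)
Your approach is exactly the one the paper indicates: the lemma is stated to ``come directly from the conditional distribution of random Gaussian vectors,'' and no further proof is supplied there. Your plan of forming the centered Gaussian triple $(P_{\infty|a},\Sigma_\rho\epsilon_{ab},-\sqrt{1/\sigma_a^2+1/\sigma_b^2-2\rho/(\sigma_a\sigma_b)}\,\epsilon_\infty)$, computing its covariance matrix, and applying the formulas $\mu_{Y|X}=\Sigma_{YX}\Sigma_{XX}^{-1}X$ and $\Sigma_{Y|X}=\Sigma_{YY}-\Sigma_{YX}\Sigma_{XX}^{-1}\Sigma_{XY}$ is precisely what is intended, and the limiting argument via the factor $1/(v^2+\sigma_a^2)$ is correct.

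There is one algebraic slip: the unconditional cross-covariance $\mathrm{Cov}\big(P_{\infty|a},-\sqrt{1/\sigma_a^2+1/\sigma_b^2-2\rho/(\sigma_a\sigma_b)}\,\epsilon_\infty\big)$ equals $\sqrt{1-\rho^2}$, not $(\sigma_a^2-\rho\sigma_a\sigma_b)\sqrt{1-\rho^2}$ as you wrote. Using the expression for $\epsilon_\infty$ from the preceding lemma and $\mathrm{Cov}(\epsilon_a,\epsilon_b)=\rho$, one finds $\mathrm{Cov}(\sigma_a\epsilon_a,\epsilon_\infty)=-\sqrt{1-\rho^2}/\sqrt{1/\sigma_a^2+1/\sigma_b^2-2\rho/(\sigma_a\sigma_b)}$, which after multiplying by the square-root factor gives $\sqrt{1-\rho^2}$. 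This is what is needed to recover the displayed conditional mean $P_{\infty|a}\sqrt{1-\rho^2}/(v^2+\sigma_a^2)$, the diagonal variance correction $-(1-\rho^2)/(v^2+\sigma_a^2)$, and the off-diagonal entry (which is $-(-\sigma_a^2+\rho\sigma_a\sigma_b)\cdot\sqrt{1-\rho^2}/(v^2+\sigma_a^2)$). With this correction your argument goes through verbatim.
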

Notice that the limit distribution does not depend on $P_{\infty|a}$.\\

\begin{proof}
Given $P_{\infty|a}=p_{\infty|a}$, $\boldsymbol\sigma_a=\sigma_a$, we can write 
\begin{align*}
    \frac{P_{\infty|a}+P_{\infty|b}}{2}-P_\infty&=(p_{\infty|a}+P_{\infty|b}) (\frac{1}{2}-\boldsymbol Q^b_\rho)+\epsilon_\infty\frac{\sqrt{1-\rho^2}}{\sqrt{\frac{1}{\sigma_a^2}+\frac{1}{\boldsymbol\sigma_b^2}-2\frac{\rho}{\sigma_a\boldsymbol\sigma_b}}}\\
    &= \epsilon_\infty\frac{\sqrt{1-\rho^2}}{\sqrt{\frac{1}{\sigma_a^2}+\frac{1}{\boldsymbol\sigma_b^2}-2\frac{\rho}{\sigma_a\boldsymbol\sigma_b}}}+\epsilon_{ab}\boldsymbol\Sigma^b_\rho(\frac{1}{2}-\boldsymbol Q_\rho^b)
\end{align*}
and
\begin{align*}
    P_{\infty|a}-P_{\infty|b} = -\epsilon_{ab}\boldsymbol\Sigma^b_\rho.
\end{align*}
Therefore
\begin{align*}
    &\E[(\frac{P_{\infty|a}+x+P_{\infty|b}-\delta(\boldsymbol\sigma_b)}{2}-P_\infty)\mathbf{1}_{P_{\infty|a}+x\leq P_{\infty|b}-\delta(\sigma_b)}| (P_{\infty|a}=p_{\infty|a},\boldsymbol\sigma_a=\sigma_a)]\\
    &=\E[\big(\frac{x-\delta(\boldsymbol\sigma_b)}{2}+\epsilon_\infty\frac{\sqrt{1-\rho^2}}{\sqrt{\frac{1}{\sigma_a^2}+\frac{1}{\boldsymbol\sigma_b^2}-2\frac{\rho}{\sigma_a\boldsymbol\sigma_b}}}+\epsilon_{ab}\boldsymbol\Sigma^b_\rho(\frac{1}{2}-\boldsymbol Q_\rho^b)\big)\mathbf{1}_{\epsilon_{ab}\geq \frac{x+\delta(\boldsymbol\sigma_b)}{\boldsymbol\Sigma^b_\rho}}| (P_{\infty|a}=p_{\infty|a},\boldsymbol\sigma_a=\sigma_a)]\\
    &=\E[\big(\frac{x-\delta(\boldsymbol\sigma_b)}{2}+P_{\infty|a}\frac{\frac{-(1-\rho^2)}{v^2+\sigma_a^2}}{\frac{1}{\sigma_a^2}+\frac{1}{\boldsymbol\sigma_b^2}-2\frac{\rho}{\sigma_a\boldsymbol\sigma_b}}+P_{\infty|a}\frac{\frac{\sqrt{1-\rho^2}(-\sigma_a^2+\rho\sigma_a\boldsymbol\sigma_b)}{v^2+\sigma_a^2}}{\frac{1}{\sigma_a^2}+\frac{1}{\boldsymbol\sigma_b^2}-2\frac{\rho}{\sigma_a\boldsymbol\sigma_b}}\frac{\sqrt{\frac{1}{\sigma_a^2}+\frac{1}{\boldsymbol\sigma_b^2}-2\frac{\rho}{\sigma_a\boldsymbol\sigma_b}-\frac{1-\rho^2}{v^2+\sigma_a^2}}}{\sqrt{|\boldsymbol\Sigma^b_\rho|^2-\frac{(-\sigma_a^2+\rho\sigma_a\boldsymbol\sigma_b)^2}{v^2+\sigma_a^2}}}\\
    &\hspace{1em}+\epsilon_{ab}(-\chi(\sigma_a,\boldsymbol\sigma_b,v)+\boldsymbol\Sigma^b_\rho(\frac{1}{2}-\boldsymbol Q^b_\rho))\big)\mathbf{1}_{\epsilon_{ab}\geq \frac{x+\delta(\boldsymbol\sigma_b)}{\boldsymbol\Sigma^b_\rho}}| (P_{\infty|a}=p_{\infty|a},\boldsymbol\sigma_a=\sigma_a)]\\
    &= \frac{1}{\sigma_+-\sigma_-}\int_{\sigma_-}^{\sigma^+}\Bigg[\Big(\frac{x-\delta(\sigma_b)}{2}+p_{\infty|a} S(\sigma_a,\sigma_b,\rho)+p_{\infty|a}\frac{\frac{-\sigma_a^2+\rho\sigma_a\sigma_b}{v^2+\sigma_a^2}}{\Sigma_\rho}\big(-\chi(\sigma_a,\sigma_b,v)+\Sigma_\rho(\frac{1}{2}-Q_\rho)\big)\Big)\\
    &\hspace{9em}\times \Big(1-\Phi\big(\frac{x+\delta(\sigma_b)-p_{\infty|a}\frac{-\sigma_a^2+\rho\sigma_a\sigma_b}{v^2+\sigma_a^2}}{\sqrt{\Sigma_\rho^2-\frac{(-\sigma_a^2+\rho\sigma_a\sigma_b)^2}{v^2+\sigma_a^2}}}\big)\Big)\\
    &\vspace{1cm}\hspace{5em}+\frac{\sqrt{\Sigma_\rho^2-\frac{(-\sigma_a^2+\rho\sigma_a\sigma_b)^2}{v^2+\sigma_a^2}}}{\Sigma_\rho}(-\chi(\sigma_a,\sigma_b,v)+\Sigma_\rho(\frac{1}{2}-Q_\rho))\Phi'(\frac{x+\delta(\sigma_b)-p_{\infty|a}\frac{-\sigma_a^2+\rho\sigma_a\sigma_b}{v^2+\sigma_a^2}}{\sqrt{\Sigma_\rho^2-\frac{(-\sigma_a^2+\rho\sigma_a\sigma_b)^2}{v^2+\sigma_a^2}}})\Bigg]d\sigma_b
\end{align*}
by setting
\[S(\sigma_a,\sigma_b,\rho):=\frac{\frac{-(1-\rho^2)}{v^2+\sigma_a^2}}{\frac{1}{\sigma_a^2}+\frac{1}{\sigma_b^2}-2\frac{\rho}{\sigma_a\sigma_b}}+ \frac{\frac{\sqrt{1-\rho^2}(-\sigma_a^2+\rho\sigma_a\sigma_b)}{v^2+\sigma_a^2}}{\frac{1}{\sigma_a^2}+\frac{1}{\sigma_b^2}-2\frac{\rho}{\sigma_a\sigma_b}}\frac{\sqrt{\frac{1}{\sigma_a^2}+\frac{1}{\sigma_b^2}-2\frac{\rho}{\sigma_a\sigma_b}-\frac{1-\rho^2}{v^2+\sigma_a^2}}}{\sqrt{\Sigma_\rho^2-\frac{(-\sigma_a^2+\rho\sigma_a\sigma_b)^2}{v^2+\sigma_a^2}}},\]

$$\chi(\sigma_a,\sigma_b,v)=\frac{\Sigma_\rho\frac{(\sigma_a^2-\rho\sigma_a\sigma_b)\sqrt{1-\rho^2}}{v^2+\sigma_a^2}}{(\Sigma_\rho^2-\frac{(-\sigma_a^2+\rho\sigma_a\sigma_b)^2}{v^2+\sigma_a^2})\sqrt{\Sigma_\rho^2\frac{(-\sigma_a^2+\rho\sigma_a\sigma_b)^2}{v^2+\sigma_a^2}}(\frac{1}{\sigma_a^2}+\frac{1}{\sigma_b^2-2\frac{\rho}{\sigma_a\sigma_b}})},$$

and by using Lemma \ref{lem::condPa}. Now, as $\chi(\sigma_a,\sigma_b,v)\underset{v\to+\infty}{\longrightarrow}0$, we have 
\begin{align*}
    &\E[(\frac{P_{\infty|a}+x+P_{\infty|b}-\delta(\boldsymbol\sigma_b)}{2}-P_\infty)\mathbf{1}_{P_{\infty|a}+x\leq P_{\infty|b}-\delta(\boldsymbol\sigma_b)}| (P_{\infty|a}=p_{\infty|a},\boldsymbol\sigma_a=\sigma_a)]\\
    \underset{v\to+\infty}{\longrightarrow}&\frac{1}{\sigma_+-\sigma_-}\int_{\sigma_-}^{\sigma^+}\big(\frac{x-\delta(\sigma_b)}{2}(1-\Phi(\frac{x+\delta(\sigma_b)}{\Sigma_\rho}))+\Sigma_\rho(\frac{1}{2}-Q_\rho)\Phi'(\frac{x+\delta(\sigma_b)}{\Sigma_\rho})\big)d\sigma_b.
\end{align*}
Notice that the convergence is uniform in $x$ as $\Phi$ and $\Phi'$ are bounded.
\end{proof}

\section{Proof of Theorem \ref{th::quadsp}}\label{app::th::quadsp}
Using the same notations as previously we have
\begin{align*}
    \E[(P^a-P_{\infty|a})^2| (P_{\infty|a},\sigma_a)] = x^2
\end{align*}
so
\begin{align*}
    e^{\sigma_a}_\delta(x)=&\frac{1}{\sigma_+-\sigma_-} \int_{\sigma_-}^{\sigma^+}\big(\frac{x-\delta(\sigma_b)}{2}(1-\Phi(\frac{x+\delta(\sigma_b)}{\Sigma_\rho}))+\Sigma_\rho(\frac{1}{2}-Q_\rho)\Phi'(\frac{x+\delta(\sigma_b)}{\Sigma_\rho})\big)d\sigma_b- \gamma x^2
\end{align*}
and 
\begin{align*}
    (e^{\sigma_a}_\delta)'(x)=&  \frac{1}{\sigma_+-\sigma_-}\int_{\sigma_-}^{\sigma^+}\big(\frac{1}{2}(1-\Phi(\frac{x+\delta(\sigma_b)}{\Sigma_\rho}))\\
    &\hspace{3cm}+\frac{1}{\Sigma_\rho}(-x\tilde Q_\rho+\delta(\sigma_b)Q_\rho)\Phi'(\frac{x+\delta(\sigma_b)}{\Sigma_\rho})\big)d\sigma_b-2\gamma x.
\end{align*}
If $\sigma_-=\sigma_+$, $\delta$ describes a NE equilibrium if and only if
\begin{align*}
    \frac{1}{2}(1-\Phi(\frac{\sqrt{2}\delta(\sigma_+)}{\sigma_+\sqrt{1-\rho}}))-2\gamma \delta(\sigma_+)=0,
\end{align*}
which can always be achieved for some unique $\delta(\sigma_+)>0$ if $\gamma>0$. \\

If $\sigma_-<\sigma_+$, we start by observing that it makes sense to consider $\delta$ non-negative. Indeed if $0\leq \delta\leq C$ we have $e^{\sigma_a}_\delta(x)\underset{x\to+\infty}{\longrightarrow}-\infty$ and $(e^{\sigma_a}_\delta)'(y)> 0$ for all $y\leq 0$. By the asymptotic expansion $1-\Phi(x)\underset{x\to +\infty}{\sim} \frac{\Phi'(x)}{x}$ we also get $(e^{\sigma_a}_\delta)'(x)\underset{x\to +\infty}{\longrightarrow}-\infty$ so by elementary computations $e^{\sigma_a}_\delta$ attains its maximum on $\R_+^*$.\\

By computing the second derivative we get
\begin{align*}
    (e^{\sigma_a}_\delta)''(x)=& \frac{1}{\sigma_+-\sigma_-}\int_{\sigma_-}^{\sigma^+}\big(-(\frac{1}{2}+\tilde Q_\rho+\frac{1}{\Sigma_\rho^2}Q_\rho\delta(\sigma_b)^2)-\frac{\delta(\sigma_b)}{\Sigma_\rho^2}Q_\rho x+\frac{1}{\Sigma_\rho^2}\tilde Q_\rho x^2\big)\frac{\Phi'(\frac{x+\delta(\sigma_b)}{\Sigma_\rho})}{\Sigma_\rho}d\sigma_b-2\gamma.
\end{align*}
Each quadratic function in the integral is non-positive on an interval
\begin{align*}
    &[\frac{1}{2}\delta(\sigma_b)\frac{(\frac{\sigma_a}{\sigma_b})^2-1}{1-\rho\frac{\sigma_a}{\sigma_b}}\\
    &\pm\sqrt{(\frac{1}{2}\delta(\sigma_b)\frac{(\frac{\sigma_a}{\sigma_b})^2-1}{1-\rho\frac{\sigma_a}{\sigma_b}})^2+(\frac{1}{2}+\tilde Q_\rho+\frac{1}{\sigma_a^2+\sigma_b^2-2\rho\sigma_a\sigma_b} Q_\rho\delta(\sigma_b)^2)\frac{\Sigma_\rho^2}{\sigma_b^2-\rho\sigma_a\sigma_b}}].
\end{align*}
Take $C=2\lambda\frac{1-\rho\frac{\sigma_+}{\sigma_-}}{1-(\frac{\sigma_-}{\sigma_+})^2}\sqrt{\frac{1}{2}\frac{(1-\rho^2)^2\sigma_-^4}{\sigma_+^2}}$ for some $\lambda\in(0,1)$. Then the intersection of those intervals contains the interval $[0,(1-\lambda)\sqrt{\frac{1}{2}\frac{(1-\rho^2)^2\sigma_-^4}{\sigma_+^2}}]$. In particular if $\lambda=\frac{1}{2\frac{1-\rho\frac{\sigma_+}{\sigma_-}}{1-(\frac{\sigma_-}{\sigma_+})^2}+1}$ we have $C = (1-\lambda)\sqrt{\frac{1}{2}\frac{(1-\rho^2)^2\sigma_-^4}{\sigma_+^2}}$ and so $(e^{\sigma_a}_\delta)''<0$ on $(0,C)$ for all $\sigma_a$.\\

Let $c\geq C$. We have for all $\sigma_a$
\begin{align*}
    (e^{\sigma_a}_\delta)'(c)\leq & \underset{u\in[\sigma_-,\sigma_+]}{\max}\frac{1}{\sigma_+-\sigma_-}\int_{\sigma_-}^{\sigma^+}\big(\frac{1}{2}(1-\Phi(\frac{c}{\sqrt{u^2+\sigma_b^2-2\rho u\sigma_b}}))\\
    &+\frac{1}{\sqrt{u^2+\sigma_b^2-2\rho u\sigma_b}}\underset{F\in{[0,C]}}{\max}(-cG(u;\sigma_b)+F\tilde G(u;\sigma_b))\Phi'(\frac{c+F}{\sqrt{u^2+\sigma_b^2-2\rho u\sigma_b}})\big)d\sigma_b\\
    &-2\gamma c\\
    &\leq \frac{1}{\sigma_+-\sigma_-}\int_{\sigma_-}^{\sigma^+}\frac{1}{2}(1-\Phi(\frac{ C}{\sqrt{\sigma_+^2+\sigma_b^2-2\rho u\sigma_b}}))d\sigma_b\\
    &+ C\frac{1}{\sigma_+-\sigma_-}\int_{\sigma_-}^{\sigma^+}\frac{1}{\sqrt{u^2+\sigma_b^2}}\tilde G(u;\sigma_b)\Phi'(\frac{C}{\sqrt{u^2+\sigma_b^2-2\rho u\sigma_b}})d\sigma_b-2\gamma C\\
    &\leq \frac{1}{2}(1-\Phi(\frac{ C}{\sqrt{2}\sigma_+}))+ C(\frac{1}{\sqrt{2}\sigma_-}\Phi'(\frac{ C}{\sqrt{2}\sigma_+})-2\gamma )
\end{align*}
by setting
\[ G(u;\sigma_b)=
\frac{\frac{1}{u^2}-\rho\frac{1}{u}\frac{1}{\sigma_b}}{\frac{1}{u^2}+\frac{1}{\sigma_b^2}-2\rho\frac{1}{u}\frac{1}{\sigma_b}}\]
and
\[ \tilde G(u;\sigma_b)=\frac{\frac{1}{\sigma_b^2}-\rho\frac{1}{u}\frac{1}{\sigma_b}}{\frac{1}{u^2}+\frac{1}{\sigma_b^2}-2\rho\frac{1}{u}\frac{1}{\sigma_b}}.
\] So for $\gamma$ large enough we have
\begin{align*}
    \gamma\geq \frac{1}{4C}(1-\Phi(\frac{ C}{\sqrt{2}\sigma_+}))+\frac{1}{2\sqrt{2}\sigma_-}\Phi'(\frac{ C}{\sqrt{2}\sigma_+})
\end{align*}
so $(e^{\sigma_a}_\delta)'(c)\leq 0$ for all $\sigma_a$, $c\geq  C$.\\

In this case, if we let $\Cc_{[0,c]}$ be the set of continuous functions on $[\sigma_-,\sigma_+]$ with values on $[0,C]$, then $\delta\in \Cc_{[0,C]}\mapsto (\sigma_a\mapsto \arg\max e^{\sigma_a}_\delta)\in \Cc_{[0,C]}$ is continuous in the $L^\infty$ norm. We consider a sequence $\delta_n\in\Cc_{[0,c]}$ converge to  $\delta\in\Cc_{[0,c]}$ in the $L^\infty$ norm. We first show that 
\begin{align*}
    \underset{(\sigma_a,x)\in[\sigma_-,\sigma_+]\times[0,C]}{\sup}|e_\delta^{\sigma_a}(x)-e_{\delta_n}^{\sigma_a}(x)|\underset{n\to +\infty}{\longrightarrow}0.
\end{align*}
To see that, we can observe that, if $\|\delta-\delta_n\|_{L^\infty}<\nu$, then, using the intermediate values theorem, we have, for any $(\sigma_a,x)\in[\sigma_-,\sigma_+]\times[0,C]$, 
\begin{align*}
    |e_\delta^{\sigma_a}(x)-e_{\delta_n}^{\sigma_a}(x)|\leq& \frac{1}{\sigma_+-\sigma_-}\int_{\sigma_-}^{\sigma^+}\big(\frac{\nu}{2}+\underset{g\in[0,C]}{\sup}|\frac{x-g}{2}\frac{\nu}{\Sigma_\rho}\Phi'(\frac{x+g}{\Sigma_\rho})|\\
    &+|(\frac{1}{2}- Q_\rho)|\nu\underset{g\in[0,C]}{\sup}|\Phi''(\frac{x+\delta(\sigma_b)}{\Sigma_\rho})|\big)d\sigma_b\\
    \leq& \frac{1}{\sigma_+-\sigma_-}\int_{\sigma_-}^{\sigma^+}\big(\frac{\nu}{2}+C\nu\sup|\Phi'|\frac{1}{\Sigma_\rho}+|(\frac{1}{2}- Q_\rho)|\nu\sup|\Phi''|\big)d\sigma_b
\end{align*}
so 
\begin{align*}
    \underset{(\sigma_a,x)\in[\sigma_-,\sigma_+]\times[0,C]}{\sup}|e_\delta^{\sigma_a}(x)-e_{\delta_n}^{\sigma_a}(x)|\leq K\nu
\end{align*}
where $K$ is some constant which depends only on $C, \sigma_-,\sigma_+$. This proves that
\begin{align*}
    \underset{(\sigma_a,x)\in[\sigma_-,\sigma_+]\times[0,C]}{\sup}|e_\delta^{\sigma_a}(x)-e_{\delta_n}^{\sigma_a}(x)|\underset{n\to +\infty}{\longrightarrow}0.
\end{align*}

In particular, $e_{\delta_n}^{\sigma_a}$ converges uniformly to $e_\delta^{\sigma_a}$ for all $\sigma_a\in[\sigma_-,\sigma_+]$. By strong concavity on $[0,C]$ and classical results on epi-convergence, we have that $\arg\max e_{\delta_n}^{\sigma_a}$ converges to $\arg\max e_\delta^{\sigma_a}$. The continuity of $\sigma_a\mapsto\arg\max e_\delta^{\sigma_a}$ also comes from the same epi-convergence argument.\\

We now prove that the convergence is uniform. We suppose that it is not. Let $\eta>0$ and $\sigma_n\in[\sigma_-,\sigma_+]$ such that $|\arg\max e_\delta^{\sigma_n}-\arg\max e_{\delta_n}^{\sigma_n}|\geq\eta$ for all $n\in\N$. We can assume that $\sigma_n\longrightarrow\sigma_a$. The function $e_{\delta_n}^{\sigma_n}$ is twice derivable at $\arg\max e_{\delta_n}^{\sigma_n}$ and $e_{\delta_n}^{\sigma_n}\leq -2\gamma$ so we have
\begin{align*}
    e_{\delta_n}^{\sigma_n}(\arg\max e_{\delta_n}^{\sigma_n})-e_{\delta_n}^{\sigma_n}(\arg\max e_{\delta}^{\sigma_n})\geq \gamma(\arg\max e_{\delta_n}^{\sigma_n}-\arg\max e_{\delta}^{\sigma_n})^2\geq \gamma\eta^2.
\end{align*}
However \[e_{\delta_n}^{\sigma_n}(\arg\max e_{\delta_n}^{\sigma_n})=\max e_{\delta_n}^{\sigma_n}\longrightarrow\max e_{\delta}^{\sigma_a}\] and \[e_{\delta_n}^{\sigma_n}(\arg\max e_{\delta}^{\sigma_n})=e_{\delta}^{\sigma_n}(\arg\max e_{\delta}^{\sigma_n})+(e_{\delta_n}^{\sigma_n}(\arg\max e_{\delta}^{\sigma_n})-e_{\delta}^{\sigma_n}(\arg\max e_{\delta}^{\sigma_n})),\] with \[e_{\delta}^{\sigma_n}(\arg\max e_{\delta}^{\sigma_n})=\max e_{\delta}^{\sigma_n}\longrightarrow \max e_{\delta}^{\sigma_a}\] and \[|e_{\delta_n}^{\sigma_n}(\arg\max e_{\delta}^{\sigma_n})-e_{\delta}^{\sigma_n}(\arg\max e_{\delta}^{\sigma_n})|\longrightarrow 0.\] So $e_{\delta_n}^{\sigma_n}(\arg\max e_{\delta_n}^{\sigma_n})-e_{\delta_n}^{\sigma_n}(\arg\max e_{\delta}^{\sigma_n})\longrightarrow 0$ and we obtain a contradiction.\\

Finally we can apply the Schauder fixed-point theorem to show that there exists a NE.

\section{Proof of Section \ref{sec::ext}}

\subsection{Proof of Proposition \ref{prop::noNEk}}\label{app::noNEk}

We have
\begin{align*}
    e^\delta_{\sigma_a}(x) =& \frac{1}{\sigma_+-\sigma_-}\int_{\sigma_-}^{\sigma_+}\frac{\bar k}{2}\bigg(\big(\Sigma_\rho^2(\frac{1}{2}-Q_\rho)-\frac{x -\delta(\sigma_b)}{2}(\delta(\sigma_b)+x)\big)(1-\Phi(\frac{x+\delta(\sigma_b)}{\Sigma_\rho}))\\
    &\hspace{7em}+\Sigma_\rho\frac{x - \delta(\sigma_b)}{2}\Phi'(\frac{x+\delta(\sigma_b)}{\Sigma_\rho})\bigg)d\sigma_b
\end{align*}
so

\begin{align*}
    (e^{\delta}_{\sigma_a}) '(x) =& \frac{1}{\sigma_+-\sigma_-}\int_{\sigma_-}^{\sigma_+}\frac{\bar k}{2}\bigg(-x(1-\Phi(\frac{x+\delta(\sigma_b)}{\Sigma_\rho}))+\Sigma_\rho Q_\rho\Phi'(\frac{x+\delta(\sigma_b)}{\Sigma_\rho})\bigg)d\sigma_b
\end{align*}

and we have once again that $(e^{\delta}_{\sigma_a}) '(x)>0$ if $x\leq 0$. If $\delta$ gives a NE we should have 
\begin{align*}
    \frac{1}{\sigma_+-\sigma_-}\int_{\sigma_-}^{\sigma_+}(e^{\delta}_{\sigma_a} )'(\delta(\sigma_a))d\sigma_a =0
\end{align*}
so
\begin{align*}
    &\int_{\sigma_-}^{\sigma_+}\int_{\sigma_-}^{\sigma_+}(-\delta(\sigma_a)(1-\Phi(\frac{\delta(\sigma_a)+\delta(\sigma_b)}{\Sigma_\rho}))+\Sigma_\rho Q_\rho\Phi'(\frac{\delta(\sigma_a)+\delta(\sigma_b)}{\Sigma_\rho}))d\sigma_a d\sigma_b=0.
\end{align*}
Summing this and the symmetric equation we get
\begin{align*}
    &\int_{\sigma_-}^{\sigma_+}\int_{\sigma_-}^{\sigma_+}(-(\delta(\sigma_a)+\delta(\sigma_b))(1-\Phi(\frac{\delta(\sigma_a)+\delta(\sigma_b)}{\Sigma_\rho}))+\Sigma_\rho\Phi'(\frac{\delta(\sigma_a)+\delta(\sigma_b)}{\Sigma_\rho}))d\sigma_a d\sigma_b=0
\end{align*}
which is impossible as $\frac{x(1-\Phi(x))}{\Phi'(x)}<1$ for $x>0$.\\

\subsection{Proof of Theorem \ref{th::quadspK}}\label{app::thmext}

We prove the second point, as the first one is straightforward from the proof of Proposition \ref{prop::noNEk}, by analogy with Theorem \ref{th::quadsp}.\\

We have
\begin{align*}
    (e^{\delta}_{\sigma_a}) '(x) =& \frac{1}{\sigma_+-\sigma_-}\int_{\sigma_-}^{\sigma_+}\frac{\bar k}{2}\bigg(-x(1-\Phi(\frac{x+\delta(\sigma_b)}{\Sigma_\rho}))+\Sigma_\rho Q_\rho\Phi'(\frac{x+\delta(\sigma_b)}{\Sigma_\rho})\bigg)d\sigma_b-2\gamma x
\end{align*}
so 
\begin{align*}
    (e^{\delta}_{\sigma_a}) ''(x) =& \frac{1}{\sigma_+-\sigma_-}\int_{\sigma_-}^{\sigma_+}\frac{\bar k}{2}\bigg(-(1-\Phi(\frac{x+\delta(\sigma_b)}{\Sigma_\rho}))+\frac{1}{\Sigma_\rho}(\tilde Q_\rho x- Q_\rho\delta(\sigma_b))\Phi'(\frac{x+\delta(\sigma_b)}{\Sigma_\rho})\bigg)d\sigma_b-2\gamma.
\end{align*}
We need to find a constant $C$ and a parameter $\gamma$ such that $(e^{\delta}_{\sigma_a}) '(x)<0$ and $(e^{\delta}_{\sigma_a}) ''(y)<0$ if $x\geq C\geq y\geq 0$ and $0\leq\delta\leq C$. We have
\begin{align*}
    (e^{\delta}_{\sigma_a}) ''(y) \leq& \frac{1}{\sigma_+-\sigma_-}\int_{\sigma_-}^{\sigma_+}\frac{\bar k}{2}\bigg(-(1-\Phi(\frac{y+\delta(\sigma_b)}{\Sigma_\rho}))+\frac{y+\delta(\sigma_b)}{\Sigma_\rho}\Phi'(\frac{y+\delta(\sigma_b)}{\Sigma_\rho})\bigg)d\sigma_b-2\gamma\\
    \leq & \frac{\bar k}{2}\max(-(1-\Phi(z))+z\Phi'(z))-2\gamma\leq 0
\end{align*}
and
\begin{align*}
    (e^{\delta}_{\sigma_a}) '(x) \leq& \frac{1}{\sigma_+-\sigma_-}\int_{\sigma_-}^{\sigma_+}\frac{\bar k}{2}\bigg(\Sigma_\rho Q_\rho\Phi'(\frac{C}{\Sigma_\rho})\bigg)d\sigma_b-2\gamma C
\end{align*}
and the conclusion follows easily from the assumptions and the arguments of the proof of Theorem \ref{th::quadsp}.

\end{document}